\numberwithin{equation}{section}
\newtheorem{theorem}{Theorem}
\newtheorem{assumption}{Assumption}
\newtheorem{lemma}{Lemma}
\theoremstyle{definition}
\newtheorem{remark}{Remark}
\newtheorem{example}{Example}
\renewenvironment{abstract}
 {\small
  \begin{center}
  \bfseries \abstractname\vspace{-.5em}\vspace{0pt}
  \end{center}
  \list{}{%
    \setlength{\leftmargin}{5mm}% <---------- CHANGE HERE
    \setlength{\rightmargin}{\leftmargin}%
    \listparindent 1.5em
    \itemindent    \listparindent
  }%
  \item\relax}
 {\endlist}
 \DeclareMathOperator*{\argmin}{arg\,min}
\newcommand{\lsim}{\mathrel{\hbox{\rlap{\lower.75ex \hbox{$\sim$}} \kern-.3em \raise.4ex \hbox{$<$}}}}
\begin{document}

\title{LM-BIC Model Selection in Semiparametric Models\thanks{I am grateful to my advisors Frank Wolak, Han Hong, and Peter Reiss for their support and guidance, as well as to Brad Larsen and Paulo Somaini for their comments. I gratefully acknowledge the financial support from the Stanford Graduate Fellowship Fund as a Koret Fellow and from the Stanford Institute for Economic Policy Research as a B.F. Haley and E.S. Shaw Fellow.}}
\author{Ivan Korolev\thanks{Department of Economics, Binghamton University. E-mail: \mbox{\href{mailto:ikorolev@binghamton.edu}{ikorolev@binghamton.edu}}. Website: \url{https://sites.google.com/view/ivan-korolev/home}}}
\date{November 26, 2018}

\maketitle

\begin{abstract}
This paper studies model selection in semiparametric econometric models. It develops a consistent series-based model selection procedure based on a Bayesian Information Criterion (BIC) type criterion to select between several classes of models. The procedure selects a model by minimizing the semiparametric Lagrange Multiplier (LM) type test statistic from \citet{korolev_2018} but additionally rewards simpler models. The paper also develops consistent upward testing (UT) and downward testing (DT) procedures based on the semiparametric LM type specification test. The proposed semiparametric LM-BIC and UT procedures demonstrate good performance in simulations. To illustrate the use of these semiparametric model selection procedures, I apply them to the parametric and semiparametric gasoline demand specifications from \citet{yatchew_no_2001}. The LM-BIC procedure selects the semiparametric specification that is nonparametric in age but parametric in all other variables, which is in line with the conclusions in \citet{yatchew_no_2001}. The results of the UT and DT procedures heavily depend on the choice of tuning parameters and assumptions about the model errors.
\end{abstract}

%\keywords{inference under imperfect model selection, uniformly valid inference after model selection, semiparametric regression, structural estimation}

\clearpage

\doublespacing

\section{Introduction}\label{introduction}

This paper considers model selection in semi- and nonparametric econometric models. It develops a consistent series-based model selection procedure for selecting between different classes of models, such as parametric, semiparametric, and nonparametric. The proposed procedure combines the features of the semiparametric LM (Lagrange Multiplier) type test developed in \citet{korolev_2018} and of the conventional parametric model selection criteria such as the BIC (Bayesian Information Criterion). It selects a model by minimizing the LM type test statistic but additionally rewards simpler models. Unlike the parametric model selection procedures, which treat the number of parameters in the model as fixed, the proposed LM-BIC procedure allows the number of parameters to grow with the sample size, thus incorporating models with nonparametric components estimated by series methods. Moreover, it takes into account that usually series methods can only approximate the true model and are misspecified for any finite number of series terms.

Model selection plays an important role in both theoretical and applied econometrics. Many papers in econometric theory have studied model selection in various settings, while applied econometricians almost always perform model selection, either formally or informally, as the true model specification is rarely, if ever, known.

Most existing model selection methods, such as the BIC or HQIC (Hannan-Quinn Information Criterion), focus on model selection in parametric models. While choosing appropriate variables in a parametric regression is an important task, it is only part of a bigger problem. Applied economists are often uncertain not only about which variables to include in a regression, but also about which class of models to use.

While parametric models are simple to estimate, easy to interpret, and yield efficient estimates when correctly specified, they may be too restrictive, fail to capture heterogeneity in the data well, and lead to inconsistent estimates under misspecification. On the other hand, nonparametric models are extremely flexible, but they may violate restrictions imposed by economic theory and suffer from the curse of dimensionality. 

Semiparametric models offer an attractive middle ground by combining the flexibility of nonparametric models with parametric assumptions about certain components of the model, which alleviates the curse of dimensionality and helps achieve regular models. Nevertheless, semiparametric models yield inconsistent estimates when misspecified and inefficient estimates when the true model is parametric.

Thus, specification choice poses a big challenge for empirical work in economics, and it may be desirable to have econometric methods that help find the most appropriate model. One way of achieving this goal is specification tests, which are reasonable when the researcher has a preferred model (say, obtained from economic theory) and wants to test whether it provides a statistically adequate description of the data. But if the researcher does not have such a model and needs to choose from a wide range of models, she may want to use a model selection procedure that compares all candidate models simultaneously.

One such procedure that has long been known in econometrics is the BIC, which performs model selection in parametric models by penalizing the number of parameters to be estimated, as adding more parameters always improves fit. The BIC was first introduced by \citet{schwarz_1978} as a Bayesian solution to model selection in parametric models. \citet{hannan_quinn_1979} and \citet{hannan_1980} prove that the BIC, as well as its modification called HQIC, consistently selects the true model under fairly general assumptions. \citet{shao_1997} develops a unified asymptotic theory for various model selection procedures, including the BIC.

\citet{andrews_1999} shows that the BIC can be used for consistent moment selection in GMM (Generalized Method of Moments) models, while \citet{andrews_lu_2001} and \citet{hong_et_al_2003} extend the BIC to both model and moment selection in GMM and EL (Empirical Likelihood) estimation and prove its consistency.

Another procedure for model selection is based on the Lasso and penalizes the sum of absolute values of the parameters (i.e. the $\ell_1$ norm of the parameters vector). The Lasso was first proposed by \citet{tibshirani_1996}. The conditions for its consistency in the context of parametric model selection are developed in \citet{zhao_yu_2006} and \citet{zou_2006}. \citet{caner_2009} extends the Lasso to model selection in GMM estimation and proves its consistency, while \citet{liao_2013} shows that the Lasso can also be used for consistent moment selection in GMM.

As for model selection in semi- and nonparametric models, \citet{huang_et_al_2010} prove that the group Lasso can consistently select the correct regressors in nonparametric additive models. \citet{lian_2014} further shows that a BIC type criterion is also consistent in this context.

\citet{belloni_chernozhukov_2013} study the performance of the Lasso and post-Lasso OLS in high-dimensional sparse models. They analyze the selection properties of the Lasso and develop an asymptotic theory for resulting post-selection estimators.

The focus of most of these papers has been on parametric models or on additive semiparametric models, while I investigate the use of BIC-type criteria in general semiparametric contexts. For example, suppose that the researcher wants to estimate the average treatment effect of treatment $D_i$ on outcome $Y_i$ while controlling for observables $X_i$ and $Z_i$, assuming that the treatment assignment is as good as random after accounting for observable differences. A fairly general way of doing this is to estimate the nonparametric (in $X_i$ and $Z_i$) specification:
\[
Y_i = \beta_1 D_i + g(X_i, Z_i) + \varepsilon_i, \quad E[\varepsilon_i | X_i, Z_i, D_i] = 0
\]

However, in practice this specification may be infeasible because of the curse of dimensionality, so the researcher may want to determine if more restricted models are appropriate. One such model which is commonly used in empirical work is fully parametric, with $g(X_i, Z_i) = \beta_0 + X_i' \gamma_1 + Z_i' \gamma_2$. Another possible specification is semiparametric , e.g. nonparametric in $X_i$ but parametric in $Z_i$, wtih $g(X_i, Z_i) = g_1(X_i) + Z_i' \gamma_2$. Yet another possible model is additively separable in $X_i$ and $Z_i$, with $g(X_i, Z_i) = g_1(X_i) + g_2(Z_i)$.

In this paper, I develop a model selection technique that uses series methods to estimate semi- and nonparametric models. It performs model selection by comparing the values of the model selection criterion that is based on the semiparametric LM test statistic and additionally penalizes more complicated models. More specifically, the model selection procedure minimizes the following model selection criterion:
\begin{align}\label{msc_criterion}
MSC(s) = \frac{\xi_s - r_{n,s} \kappa_n}{\sqrt{2 r_{n,s}}} = \frac{\xi_s - r_{n,s}}{\sqrt{2 r_{n,s}}} - \frac{r_{n,s} (\kappa_n - 1)}{\sqrt{2 r_{n,s}}} ,
\end{align}
where $\xi$ is the quadratic form in the semiparametric model residuals, $s$ denotes a particular series form, $r_{n,s}$ is the number of restrictions the series form $s$ imposes on the fully nonparametric model, and $\kappa_n \to \infty$ at a certain rate as $n \to \infty$.\footnote{I will formally introduce the notation in the next section.} A natural choice of $\kappa_n$ would be $\kappa_n = \ln{n}$, but other functions satisfying certain rate conditions will also work. Intuitively, $\frac{\xi_s - r_{n,s}}{\sqrt{2 r_{n,s}}}$ is the semiparametric LM type test statistic as in \citet{korolev_2018}, while $\frac{r_{n,s} (\kappa_n - 1)}{\sqrt{2 r_{n,s}}}$ is the penalty term that rewards simpler models (i.e. models with large $r_{n,s}$). For the fully unrestricted nonparametric model, $r_{n,s} = 0$, and in that case $MSC(s) = 0$. Model search takes place not over individual coefficients but over the groups that correspond to different classes of models.

The remainder of the paper is organized as follows. Section~\ref{setup} introduces the model, describes the model selection problem, and introduces the model selection criterion. Section~\ref{asymptotics} develops the asymptotic theory for the proposed model selection procedure. Sections~\ref{upward_testing} and~\ref{downward_testing} discuss an upward and downward testing procedures based on the LM type specification test from \citet{korolev_2018}. Section~\ref{simulations} studies the behavior of the proposed model selection procedure in simulations. Section~\ref{empirical_example} applies the proposed method to the household gasoline demand dataset from \citet{yatchew_no_2001}. Section~\ref{conclusion} concludes.

Appendix~\ref{appendix_tables_figures} collects all tables. Appendix~\ref{appendix_proofs} contains proofs of technical results. 

\section{Setup and Definitions}\label{setup}

\subsection{Classes of Models}

Let $(Y_i, X_i')' \in \mathbb{R}^{1+d_x}$, $d_x \in \mathbb{N}$, $i = 1, ..., n$, be independent and identically distributed random variables with $E[Y_i^2] < \infty$. Then there exists a measurable function $g$ such that $g(X_i) = E[Y_i | X_i]$ a.s., and the nonparametric model can be written as
\[
Y_i = g(X_i) + \varepsilon_i, \quad E[\varepsilon_i | X_i] = 0
\]

\begin{remark}

In fact, the most flexible model under consideration does not have to be fully nonparametric. In the treatment effects example above, the most flexible model is given by
\[
Y_i = \beta_1 D_i + g(X_i, Z_i) + \varepsilon_i, \quad E[\varepsilon_i | X_i, Z_i, D_i] = 0
\]
As long as this semiparametric model is general enough to include the true data generating process, all the results obtained below will still hold, and model selection can be performed on the nonparametric part $g(X_i, Z_i)$ under the maintained assumption of additive separability in $D_i$.

\end{remark}

This nonparametric model is very flexible, but it may suffer from the curse of dimensionality or may be difficult to interpret. Because of this, the researcher may prefer more restrictive models, such as semiparametric ones. If these models are correctly specified, they are likely to provide more efficient estimates of the parameters of interest; however, if these models are misspecified, the resulting estimates will be inconsistent.

Thus, the researcher's goal is to compare different parametric, semiparametric, and nonparametric models and to select the one which is correctly specified yet parsimonious. A semiparametric model $c \in \mathcal{C}$ can be written as
\[
Y_i = f_c(X_i, \theta_c, h_c) + e_{c,i}
\]
where $f_c: \mathcal{X} \times \Theta_c \times \mathcal{H}_c \to \mathbb{R}$ is a known function, $\theta_c \in \Theta_c \subset \mathbb{R}^{d_c}$ is a finite-dimensional parameter, and $h_c \in \mathcal{H}_c = \mathcal{H}_{1,c} \times ... \times \mathcal{H}_{q_c,c}$ is a vector of unknown functions. Let $\mathcal{C}^*$ denote the set of correctly specified models. Formally,
\[
\mathcal{C}^* = \{ c \in \mathcal{C} : P_{X} \left(g(X_i) = f_c(X_i, \theta_{c,0}, h_{c,0}) \right) = 1 \text{ for some } \theta_{c,0} \in \Theta_c, h_{c,0} \in \mathcal{H}_c \}
\]

Several different models can be correctly specified simultaneously, and the researcher's goal is to find the most parsimonious of them, i.e. the one with the lowest dimensionality of the nonparametric component or with the fastest convergence rate.\footnote{I will define what precisely consistent model selection means in semi- and nonparametric models later.} Before discussing the model selection problem in more detail, I will introduce series methods that will be used in estimation and model selection.

\subsection{Series Methods}

As in \citet{korolev_2018}, I use series for the purposes of estimation and model selection. For any variable $z$, let $Q^{a_n}(z) = (q_1(z), ..., q_{a_n}(z))'$ be an $a_n$-dimensional vector of approximating functions of $z$, where the number of series terms $a_n$ is allowed to grow with the sample size $n$. Possible choices of series functions include:

\begin{enumerate}[(a)]

\item

Power series. For univariate $z$, they are given by:
\begin{align}\label{series}
Q^{a_n}(z) = (1, z, ..., z^{a_n-1})'
\end{align}

\item

Splines. Let $s$ be a positive scalar giving the order of the spline, and let $t_{1}, ..., t_{a_n - s - 1}$ denote knots.  Then for univariate $z$, splines are given by:
\begin{align}\label{splines}
Q^{a_n}(z) = (1, z, ..., z^s, \mathbbm{1}\{ z > t_1 \} (z-t_1)^s, ..., \mathbbm{1}\{ z > t_{a_n - s -1} \} (z - t_{a_n - s -1})^s)
\end{align}

\end{enumerate}

Multivariate power series or splines can be formed from products of univariate ones.

\subsection{Model Selection Problem}

Because the class of models $\mathcal{C}$ that are allowed includes semiparametric models and because I use series methods to estimate different models, the formal statement of model selection problem is tricky. In particular, all models written in series forms will only provide approximations to semiparametric models and will most likely be misspecified for any fixed number of series terms. However, if a given series form can approximate a particular model that is correctly specified, this misspecification should disappear as the number of series terms grows to infinity. In this section I formally state the model selection problem and define what consistent model selection means for semiparametric models.

To fix ideas, it is useful to consider model selection in parametric models first and contrast it with model selection in semiparametric models. Usually model selection procedures seek to achieve two goals: select a model that is (a) correctly specified and (b) parsimonious. In parametric models, the model selection problem can usually be formulated as follows. Suppose that the model is given by
\[
Y_i = \sum_{j=1}^{k}{X_{ji} \beta_j} + \varepsilon_i, \quad E[\varepsilon_i | X_i] = 0,
\]
and some of the $\beta_j$s can be zero. Then the goal of model selection is to find out which $\beta_j$s are equal to zero and select the model which includes all of the relevant regressors (i.e. is correctly specified) but none of the irrelevant ones (i.e. is parsimonious).

It might be natural to think of model selection in semiparametric models as the problem of choosing $a_n$ or picking nonzero elements in the series expansion $h(z) = \sum_{j=1}^{a_n}{q_j(z) \beta_j}$, but this is not what I am doing here. The moment selection problem is semiparametric models will be formulated differently, but it will still seek to combine parsimoniousness and correct specification.

First, fix $a_n$, the number of series terms used in series expansions of different semiparametric models. This fixed $a_n$ will lead to a certain number of series terms $m_{n,s}(a_n)$ in every series form $s \in \mathcal{S}$ under consideration. The model selection procedure will treat $a_n$ as fixed and will aim to select the most parsimonious model, i.e. the model with the lowest $m_{n,s}(a_n)$.

But, of course, it is not enough to select the simplest model; the selected model should also be correctly specified. As discussed above, if the true model is semi- or nonparametric, no series form will be exactly correct for any fixed $a_n$. Thus, to define what correct specification means in semiparametric settings, I will let $a_n \to \infty$ and look for series forms that can approximate the true semiparametric model well as $a_n$ grows.

In other words, the goal is to select the series form that (a) can approximate the true model as the number of series terms $a_n$ in every series expansion grows; (b) for each fixed $a_n$, has the lowest number of terms $m_{n,s}(a_n)$ among all models that satisfy (a). Note that the nonparametric model $Y_i = g(X_i) + \varepsilon_i$, written in a series form, will always satisfy (a), but it will have the largest number of parameters $m_{n,s}(a_n)$ among all models for a given $a_n$.

I now state the model selection problem formally. Write model $c$ in series form $s$ as
\begin{align}\label{semiparametric_series_form}
Y_i = f_c(X_i,\theta_c,h_c) + e_{c,i} = W_{s,i}' \beta_{1,s} + R_{c,s,i} + e_{c,i} = W_{si}' \beta_{1,s} + \eta_{s,c,i},
\end{align}
where $W_{s,i} := W_s^{m_{n,s}}(X_i) = (W_{s,1}(X_i), ..., W_{s,m_{n,s}}(X_i))'$ are appropriate regressors or basis functions that are used in the series form $s$, $R_{c,s,i} := \left( f_c(X_i,\theta_c,h_c) - W_{s,i}' \beta_{1,s} \right)$ is the approximation error, and $\eta_{c,s,i} := e_{c,i} + R_{c,s,i}$ is the composite error term.

For a given series form $s$ and number of series terms $m_{n,s}(a_n)$, the estimate of $\beta_{1,s}$ is
\[
\tilde{\beta}_{1,s} = (W_{s}' W_{s})^{-1} W_{s}' Y,
\]
where $W_{s} = (W_{s,1}, ..., W_{s,n})'$, $Y = (Y_1, ..., Y_n)'$.

Each semiparametric model $c$ can be written in a series form $s$ defined by the series terms used $W_{s}^{m_{n,s}}(x)$ and the number of series terms $m_{n,s}(a_n)$ as a function of $a_n$. However, not all series forms $s$ provide a good approximation of the model $c$. Among all series forms $s \in \mathcal{S}$, I need to define the ones that correspond to the semiparametric model $c$. Let
\[
\mathcal{S}^*(c) = \{ s \in \mathcal{S}: \sup_{x \in \mathcal{X}}{| f(x,\theta_{c},h_{c}) - W_s^{m_{n,s}}(x)' \beta_{1,s} |} \to 0 \quad \text{as} \quad a_n \to \infty \}
\]
be all series forms $s$ that can approximate a given semiparametric model $c$. 

Next, I consider \textit{all} correctly specified models $c \in \mathcal{C}^*$ and define \textit{all} series forms that can approximate at least one correctly specified model. Namely, let
\[
\mathcal{S}_u^* = \cup_{c \in \mathcal{C}^*}{ \mathcal{S}^*(c) }
\]
be all ``correctly specified'' series forms. In other words, $\mathcal{S}_u^*$ includes all series forms $s$ that can approximate some correctly specified model $c$ as the sample size grows.

As I will show, the model selection procedures I propose will select the most parsimonious member of $\mathcal{S}_u^*$, i.e. the series form $s \in \mathcal{S}_u^*$ with the smallest $m_{n,s}(a_n)$ for every fixed $a_n$.

\begin{example}

Suppose that $X_i = (X_{1i},X_{2i})$. The researcher considers three models: a semiparametric partially linear model $Y_i = X_{1i} \beta_1 + g_{PL,2}(X_{2i}) + \varepsilon_i$, an additive semiparametric model $Y_i = g_{A,1}(X_{1i}) + g_{A,2}(X_{2i}) + \varepsilon_i$, and a nonparametric model $Y_i = g_{NP}(X_{1i},X_{2i}) + \varepsilon_i$. 

In my notation, $f_{PL}(X_i,\theta_{PL},h_{PL}) = X_{1i} \beta_{PL} + g_{PL,2}(X_{2i})$ with $\theta_{PL} = \beta_{PL}$ and $h_{PL} = g_{PL,2}$;  $f_A(X_i,\theta_A,h_A) = g_{A,1}(X_{1i}) + g_{A,2}(X_{2i})$ with $h_A = (g_{A,1}, g_{A,2})'$, $f_{NP}(X_i,\theta_{NP},h_{NP}) = g_{NP}(X_{1i},X_{2i})$ with $h_{NP} = g_{NP}$.

Let $Q_1^{a_n}(x_1)$ and $Q_2^{a_n}(x_2)$ be series terms used to approximate functions of $x_1$ and $x_2$ correspondingly. Suppose that they both include a constant term, as do power series and splines. Denote by $\tilde{Q}^{a_n-1}(\cdot)$ the series term that exclude the constant from $Q^{a_n}(\cdot)$.

Suppose that the researcher considers the following series forms: $W_{1}^{m_{n,1}}(X_i) = (X_{1i}, Q_2^{a_n}(X_{2i})')'$ with ${m_{n,1}}(a_n) = a_n + 1$; $W_{2}^{m_{n,2}}(X_i) = (Q_1^{a_n}(X_{1i})', X_{2i})'$ with ${m_{n,2}}(a_n) = a_n + 1$; $W_{3}^{m_{n,3}}(X_i) = (1, \tilde{Q}_1^{a_n-1}(X_{1i})', \tilde{Q}_2^{a_n-1}(X_{2i})')'$ with ${m_{n,3}} = 2 a_n - 1$; and $W_{4}^{m_{n,4}}(X_i) = vec(Q_1^{a_n}(X_{1i}) Q_2^{a_n}(X_{2i})')$ with ${m_{n,4}}(a_n) = a_n^2$. Then $\mathcal{S}^*(PL) = \{ 1, 3, 4 \}$, $\mathcal{S}^*(A) = \{ 3, 4 \}$, and $\mathcal{S}^*(NP) = \{ 4 \}$.

If the true model is partially linear, e.g. $Y_i = X_{1i} \beta_1 + \exp(X_{2i}) \beta_2 + \varepsilon_i$, then $\mathcal{S}_u^* = \{ 1, 3, 4 \}$ and the model selection procedure should select the first series form with $W_{1}^{m_{n,1}}(X_i) = (X_{1i}, Q_2^{a_n}(X_{2i})')'$, because it is the simplest series form that accounts for nonlinearities in $X_{2i}$. If the true model is nonlinear in $X_{1i}$ as well, e.g. $Y_i = X_{1i}^2 \beta_1 + \exp(X_{2i}) \beta_2 + \varepsilon_i$, then $\mathcal{S}_u^* = \{ 3, 4 \}$ and the model selection procedure should select the third series form with $W_{3}^{m_{n,3}}(X_i) = (1, \tilde{Q}_1^{a_n-1}(X_{1i})', \tilde{Q}_2^{a_n-1}(X_{2i})')'$, because it is the simplest series form that accounts for nonlinearities in both $X_{1i}$ and $X_{2i}$. If the true model includes interactions between $X_{1i}$ and $X_{2i}$, e.g. $Y_i = X_{1i}^2 \beta_1 + \exp(X_{2i}) \beta_2 + X_{1i} X_{2i} \beta_3 + \varepsilon_i$, then $\mathcal{S}_u^* = \{ 4 \}$ and the model selection procedure should select the fourth series form with $W_{4}^{m_{n,4}}(X_i) = vec(Q_1^{a_n}(X_{1i}) Q_2^{a_n}(X_{2i})')$, because none of the simpler models accounts for possible interactions between $X_{1i}$ and $X_{2i}$.

\end{example}

\subsection{Model Selection Criterion}

Next, I introduce additional series terms $T_{s,i} := T_s^{r_{n,s}}(X_i)$ and let $P_i := P^{k_{n,s}}(X_i) := (W_{s,i}', T_{s,i}')'$, so that $P_i$ can approximate a fully nonparametric model as $a_n \to \infty$. Note, however, that $a_n$ will be treated as fixed by the model selection procedure, so that it is possible to compare the complexity of different models $m_{n,s}(a_n)$. Here $m_{n,s}(a_n)$ is the number of series terms in the series form $s$ and $r_{n,s}(a_n)$ is the number of restrictions that the series form $s$ imposes on a fully nonparametric model for a given $a_n$. The model selection procedure is based on the LM type quadratic form, as in \citet{korolev_2018}:\footnote{For any vector $V_i$, let $V = (V_1, ..., V_n)'$ be the matrix that stacks all observations together.}
\[
\xi_s = \tilde{\varepsilon}_s' P (\tilde{\sigma}_s^2 P'P)^{-1} P' \tilde{\varepsilon}_s,
\]
where $\tilde{\varepsilon}_s = Y - W_s \tilde{\beta}_{1,s}$ are the semiparametric residuals for the series form $s$ and $\tilde{\sigma}_s^2 = \tilde{\varepsilon}_s' \tilde{\varepsilon}_s/n$. A heteroskedasticity robust version of the LM-BIC procedure is based on 
\[
\xi_{HC,s} = \tilde{\varepsilon}_s' \tilde{T}_s (\tilde{T}_s' \tilde{\Sigma}_s \tilde{T}_s)^{-1} \tilde{T}_s' \tilde{\varepsilon}_s,
\]
where $\tilde{\Sigma}_s = diag(\tilde{\varepsilon}_{s,i}^2)$ and $\tilde{T}_s = M_{W_s} T_s = (I - W_s (W_s' W_s)^{-1} W_s) T_s$ are the residuals from the regression of each element of $T_{s,i}$ on $W_{s,i}$.

This form is analogous to the parametric LM test statistic or overidentifying restrictions test statistic, but the dimensionality of $P$, $k_n$, is allowed to grow with $n$ to accommodate semi- and nonparametric models. The model selection criterion selects the model with the lowest value of the LM test statistic but additionally rewards parsimonious models that impose more restrictions:
\[
MSC(s) = \frac{\xi_s - r_{n,s} \kappa_n}{\sqrt{2 r_{n,s}}},
\]
where $r_{n,s}$ is the number of restrictions the series form $s$ imposes on the fully nonparametric model and $\kappa_n \to \infty$ as $n \to \infty$. The heteroskedastiticy robust model selection criterion is
\[
MSC_{HC}(s) = \frac{\xi_{HC,s} - r_{n,s} \kappa_n}{\sqrt{2 r_{n,s}}}
\]

\section{Asymptotic Theory for Semiparametric LM-BIC}\label{asymptotics}

I impose the following assumptions that are similar to the ones in \citet{korolev_2018}.

\begin{assumption}\label{dgp}

$(Y_i, X_i')' \in \mathbb{R}^{1+d_x}, d_x \in \mathbb{N}, i = 1, ..., n$ are i.i.d. random draws of the random variables $(Y, X')'$, and the support of $X$, $\mathcal{X}$, is a compact subset of $\mathbb{R}^{d_x}$.

\end{assumption}

Next, I define the error term and impose two moment conditions.

\begin{assumption}\label{errors_unified}

Let $\varepsilon_i = Y_i - E[Y_i | X_i]$. The following two conditions hold:

\begin{enumerate}[(a)]

\item

$0 < \sigma^2(x) = E[\varepsilon_i^2 | X_i = x] < \infty$.

\item

$E[\varepsilon_i^4 | X_i]$ is bounded.

\end{enumerate}

\end{assumption}

The following assumption deals with the behavior of the approximating series functions. From now on, let $|| A || = [tr(A'A)]^{1/2}$ be the Euclidian norm of a matrix $A$.

\begin{assumption}\label{series_norms_eigenvalues}

For each $m_{s}$, $r_{s}$, and $k$ there are matrices $B_{1,s}$ and $B_{2,s}$ such that, for $\tilde{W}_s^{m_{s}}(x) = B_{1,s} W_s^{m_{s}}(x)$, $\tilde{T}_s^{r_{s}}(x) = B_{2,s} T_s^{r_{s}}(x)$, and $\tilde{P}_s^{k}(x) = (\tilde{W}_s^{m_{s}}(x)', \tilde{T}_s^{r_{s}}(x)')'$,

\begin{enumerate}[(a)]

\item

There exists a sequences of constants $\zeta(\cdot)$ that satisfies the conditions $\sup_{x \in \mathcal{X}} || \tilde{W}^{m_{s}}(x) || \leq \zeta(m_{s})$, $\sup_{x \in \mathcal{X}} || \tilde{T}^{r_{s}}(x) || \leq \zeta(r_{s})$, and $\sup_{x \in \mathcal{X}} || \tilde{P}^{k}(x) || \leq \zeta(k)$.

\item

The smallest eigenvalue of $E[\tilde{P}^{k}(X_i) \tilde{P}^{k}(X_i)']$ is bounded away from zero uniformly in $k$.

\end{enumerate}

\end{assumption}

The following assumption states that series methods should be able to approximate unknown functions sufficiently well as the number of series terms grows.

\begin{assumption}\label{series_approx}
If $s \in \mathcal{S}^*(c)$, there exists $\alpha > 0$ such that
\[
\sup_{x \in \mathcal{X}}{| f_c(x,\theta_c,h_c) - W^{m_{n,s}}(x)' \beta_{1,s} |} = O(m_{n,s}^{-\alpha})
\]
\end{assumption}

Then I can use the following result.

\begin{lemma}[\citet{li_racine_2007}, Theorem 15.1]\label{series_f_rates}

Suppose that $c \in \mathcal{C}^*$. Let $f_c(x) = f_c(x,\theta_{c},h_{c})$, $f_{c,i} = f_c(X_i)$, $\tilde{f}_{s}(x) = W_s^{m_{n,s}}(x)' \tilde{\beta}_{1,s}$, and $\tilde{f}_{s,i} = \tilde{f}_{s}(X_i)$. Under Assumptions \ref{dgp}, \ref{series_norms_eigenvalues}, and \ref{series_approx}, the following is true:

\[
\sup_{x \in \mathcal{X}} | \tilde{f}_s(x) - f_c(x) | = O_p \left( \zeta(m_{n,s}) \left( \sqrt{m_{n,s}/n}+m_{n,s}^{-\alpha} \right) \right)
\]

\end{lemma}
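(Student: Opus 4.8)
The plan is to use the standard series-regression decomposition into an estimation (variance) component and an approximation (bias) component and to control each in the sup norm. Since the fitted value $\tilde f_s(x) = W_s^{m_{n,s}}(x)'(W_s'W_s)^{-1}W_s'Y$ is invariant to any nonsingular reparametrization of the basis, I would first pass to the normalized basis $\tilde W_s^{m_{n,s}}(x) = B_{1,s}W_s^{m_{n,s}}(x)$ of Assumption~\ref{series_norms_eigenvalues}, for which $\sup_{x}\|\tilde W_s^{m_{n,s}}(x)\| \le \zeta(m_{n,s})$ and $E[\tilde W_s^{m_{n,s}}(X_i)\tilde W_s^{m_{n,s}}(X_i)']$ has smallest eigenvalue bounded away from zero. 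Writing $Y_i = f_{c,i} + \varepsilon_i$, letting $\beta^*_{1,s}$ be the $L^2(X)$ projection coefficient of $f_c$ onto the span of $\tilde W_s^{m_{n,s}}$, and setting $R_i = f_{c,i} - \tilde W_s^{m_{n,s}}(X_i)'\beta^*_{1,s}$, I would decompose
\[
\tilde f_s(x) - f_c(x) = \tilde W_s^{m_{n,s}}(x)'(\tilde\beta_{1,s} - \beta^*_{1,s}) + \big(\tilde W_s^{m_{n,s}}(x)'\beta^*_{1,s} - f_c(x)\big),
\]
and use the elementary bound $|\tilde W_s^{m_{n,s}}(x)'v| \le \zeta(m_{n,s})\|v\|$ to convert every coefficient-norm bound into a uniform bound.

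For the approximation term, Assumption~\ref{series_approx} provides some coefficient vector with sup-norm error $O(m_{n,s}^{-\alpha})$. Comparing it with the $L^2$ projection (which is no worse in $L^2$) and using the eigenvalue bound of Assumption~\ref{series_norms_eigenvalues}, I would obtain $\|\beta^*_{1,s} - \beta_{1,s}\| = O(m_{n,s}^{-\alpha})$, so that the projection itself approximates $f_c$ uniformly at rate $O(\zeta(m_{n,s})\,m_{n,s}^{-\alpha})$, which is inside the claimed bound.

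For the estimation term I would write $\tilde\beta_{1,s} - \beta^*_{1,s} = \hat Q_s^{-1}\,\tfrac1n\tilde W_s'(\varepsilon + R)$ with $\hat Q_s = \tfrac1n\tilde W_s'\tilde W_s$. The first task is to show $\|\hat Q_s^{-1}\| = O_p(1)$: a second-moment (Markov) bound gives $\|\hat Q_s - E[\tilde W_s^{m_{n,s}}(X_i)\tilde W_s^{m_{n,s}}(X_i)']\| = O_p(\zeta(m_{n,s})\sqrt{m_{n,s}/n})$, which is $o_p(1)$ under the maintained rate conditions, so the population eigenvalue lower bound transfers to $\hat Q_s$ with probability approaching one. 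For the noise part, bounded conditional variance of $\varepsilon_i$ together with $tr(E[\tilde W_s^{m_{n,s}}(X_i)\tilde W_s^{m_{n,s}}(X_i)'])$ being of order $m_{n,s}$ gives $E\|\tfrac1n\tilde W_s'\varepsilon\|^2 = O(m_{n,s}/n)$, hence a contribution $O_p(\zeta(m_{n,s})\sqrt{m_{n,s}/n})$ to the sup norm. Because $\beta^*_{1,s}$ is the $L^2$ projection, $E[\tilde W_s^{m_{n,s}}(X_i)R_i]=0$, so $\tfrac1n\tilde W_s'R$ is mean zero with $E\|\cdot\|^2 = O(\zeta(m_{n,s})^2 m_{n,s}^{-2\alpha}/n)$, whose contribution is of smaller order than the two leading pieces. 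Collecting terms yields the stated rate.

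The hard part will be the uniform control of the random Gram matrix $\hat Q_s$ and the resulting claim $\|\hat Q_s^{-1}\| = O_p(1)$: this is exactly where the $\zeta(m_{n,s})\sqrt{m_{n,s}/n}\to 0$ restriction is indispensable, and it is what allows me to treat $\hat Q_s^{-1}$ as asymptotically bounded when converting the coefficient error into a uniform error. The only other delicate point is the bookkeeping that keeps the approximation bias at $\zeta(m_{n,s})\,m_{n,s}^{-\alpha}$; exploiting the orthogonality of the $L^2$ projection residual, rather than a crude Cauchy–Schwarz bound on $\tfrac1n\tilde W_s'R$, is what prevents the bias from acquiring a spurious extra factor of $\sqrt{m_{n,s}}$.
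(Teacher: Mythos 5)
Your proposal is correct and is essentially the same argument as the one the paper relies on: the lemma is stated without proof as a citation of \citet{li_racine_2007} (Theorem 15.1), whose proof (following Newey's 1997 series-estimation argument) is exactly your route --- pass to the normalized basis $\tilde W_s$, split $\tilde f_s - f_c$ into the estimation piece $\hat Q_s^{-1}\tfrac{1}{n}\tilde W_s'(\varepsilon + R)$ and the projection bias, control $\|\hat Q_s - E[\tilde W_s\tilde W_s']\| = O_p\bigl(\zeta(m_{n,s})\sqrt{m_{n,s}/n}\bigr)$ to get $\|\hat Q_s^{-1}\| = O_p(1)$, and convert coefficient bounds to uniform bounds via $\zeta(m_{n,s})$. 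Your handling of the bias (comparing the $L^2$ projection coefficient with the sup-norm approximant from Assumption~\ref{series_approx} and exploiting the orthogonality $E[\tilde W_s^{m_{n,s}}(X_i)R_i]=0$) matches the standard proof and correctly keeps the bias contribution at $O(\zeta(m_{n,s})\,m_{n,s}^{-\alpha})$.
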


These assumptions will allow me to use the results from \citet{korolev_2018}, e.g. Theorems 1, 2, and 3. Then I obtain the following result.

\begin{theorem}\label{lm_bic_consistency}
Suppose that $\mathcal{C}^*$ is nonempty and Assumptions~\ref{dgp}--\ref{series_approx} hold. Assume that $\sigma^2(x) = \sigma^2$, $0< \sigma^2 < \infty$, for all $x \in \mathcal{X}$ and the following rate conditions hold for every series form $s \in \mathcal{S}$ except for the fully nonparametric model:

\begin{align}
\label{rate_cond_r_1_ms} \zeta(k_n)^2 k_n r_{n,s}^{1/2}/n &\to 0 \\
\label{rate_cond_r_2_ms} \zeta(r_n) r_{n,s} / n^{1/2} &\to 0 \\
\label{rate_cond_r_3_ms} \zeta(k_n) m_{n,s}^{1/2} k_n^{1/2}/n^{1/2} &\to 0 \\
\label{rate_cond_r_4_ms} n m_{n,s}^{- 2\alpha}/ r_{n,s}^{1/2} &\to 0 \\
\label{rate_cond_r_5_ms} \zeta(r_{n,s})^2/n^{1/2} &\to 0
\end{align}

Suppose that the series form $\hat{s}$ is selected by minimizing the model selection criterion $MSC(s)$ over all series forms $\mathcal{S}$ under consideration:
\[
\hat{s} = \argmin_{s \in \mathcal{S}}{MSC(s)} = \argmin_{s \in \mathcal{S}}{\frac{\xi_s - r_{n,s} \kappa_n}{\sqrt{2 r_{n,s}}}}
\]

Suppose that $\kappa_n \to \infty$ and $\kappa_n r_{n,s}/n \to 0$ as $n \to \infty$ for all $s \in \mathcal{S}$. Then the selected series form $\hat{s}$ is consistent in the following sense: 

\begin{enumerate}

\item

$\hat{s} \in \mathcal{S}_u^*$.
\item

$m_{n,\hat{s}}(a_n) \leq m_{n,s'}(a_n)$ for any $s' \in \mathcal{S}_u^*$.

\end{enumerate}

\end{theorem}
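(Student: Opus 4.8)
The plan is to partition the candidate set $\mathcal{S}$ into the ``correctly specified'' series forms $\mathcal{S}_u^*$ and their complement, to pin down the order of magnitude of $MSC(s)$ separately on each piece, and to exploit that these two orders are cleanly separated. Concretely, I will show that $MSC(s) \to -\infty$ for every $s \in \mathcal{S}_u^*$ (at a rate governed by $\sqrt{r_{n,s}}\,\kappa_n$) while $MSC(s) \to +\infty$ for every misspecified $s \notin \mathcal{S}_u^*$. Since $\mathcal{C}^*$ is nonempty, $\mathcal{S}_u^*$ is nonempty, so the minimizer must eventually fall in $\mathcal{S}_u^*$, which gives Part~1. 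Within $\mathcal{S}_u^*$ the leading term of $MSC(s)$ is strictly decreasing in $r_{n,s}$, so the minimizer must be the form with the largest number of restrictions, equivalently the smallest $m_{n,s}(a_n)$, which gives Part~2. Because $\mathcal{S}$ is finite, all the ``with probability approaching one'' statements can be combined by a union bound.

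For $s \in \mathcal{S}_u^*$ I would invoke Theorem~1 of \citet{korolev_2018} (the null limit of the LM form), which under Assumptions~\ref{dgp}--\ref{series_approx}, homoskedasticity, and the rate conditions \eqref{rate_cond_r_1_ms}--\eqref{rate_cond_r_5_ms} gives $(\xi_s - r_{n,s})/\sqrt{2 r_{n,s}} \xrightarrow{d} N(0,1)$, hence $\xi_s = r_{n,s} + O_p(\sqrt{r_{n,s}})$. Substituting into the decomposition already displayed in \eqref{msc_criterion},
\[
MSC(s) = \frac{\xi_s - r_{n,s}}{\sqrt{2 r_{n,s}}} - \sqrt{\tfrac{r_{n,s}}{2}}\,(\kappa_n - 1) = O_p(1) - \sqrt{\tfrac{r_{n,s}}{2}}\,(\kappa_n - 1),
\]
so that $MSC(s) \to -\infty$ at the deterministic rate $\sqrt{r_{n,s}}\,\kappa_n$, the $O_p(1)$ fit term being negligible relative to the penalty.

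For misspecified $s \notin \mathcal{S}_u^*$ the series form cannot approximate any $c \in \mathcal{C}^*$, so $\tilde{\varepsilon}_s$ retains a non-vanishing systematic component that is captured by the augmenting terms $T_s$. Invoking the consistency part of the test (Theorems~2--3 of \citet{korolev_2018}) together with Assumption~\ref{series_approx} and Lemma~\ref{series_f_rates}, I would show that $\xi_s$ diverges at the parametric rate, $\xi_s \geq c\,n$ with probability approaching one for some $c>0$. The upper rate condition $\kappa_n r_{n,s}/n \to 0$ then forces the penalty $r_{n,s}\kappa_n = o(n)$ to be dominated by $\xi_s$, so the numerator of $MSC(s)$ is of exact order $n$ and positive, while \eqref{rate_cond_r_2_ms} yields $r_{n,s} = o(n^{1/2})$; hence $MSC(s)$ is bounded below by a positive multiple of $n/\sqrt{2 r_{n,s}} \to +\infty$. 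Combined with the previous paragraph, every misspecified form is beaten by any fixed $s_0 \in \mathcal{S}_u^*$ with probability approaching one, establishing $P(\hat{s} \in \mathcal{S}_u^*) \to 1$.

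It remains to compare two correctly specified forms $s_1, s_2 \in \mathcal{S}_u^*$ with $r_{n,s_1} > r_{n,s_2}$, equivalently $m_{n,s_1} < m_{n,s_2}$. From the expansion above,
\[
MSC(s_1) - MSC(s_2) = O_p(1) - \frac{\kappa_n - 1}{\sqrt 2}\bigl(\sqrt{r_{n,s_1}} - \sqrt{r_{n,s_2}}\bigr).
\]
Because $\kappa_n \to \infty$ and the gap $\sqrt{r_{n,s_1}} - \sqrt{r_{n,s_2}}$ does not vanish in the relevant designs (this is precisely the role of a \emph{diverging} $\kappa_n$, as a fixed penalty could not separate two correctly specified models whose fit terms are both $O_p(1)$), the deterministic term diverges to $-\infty$ and dominates the $O_p(1)$ remainder. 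Hence $MSC(s_1) < MSC(s_2)$ with probability approaching one; taking $s_1$ to be the most parsimonious correctly specified form and ranging $s_2$ over the finitely many other elements of $\mathcal{S}_u^*$ yields Part~2. I expect the main obstacle to be the misspecified case: establishing that $\xi_s$ genuinely diverges at rate $n$ (rather than more slowly) and then verifying that the two rate restrictions on $\kappa_n$ are mutually consistent, namely $\kappa_n \to \infty$ to separate nested correctly specified models yet $\kappa_n r_{n,s} = o(n)$ so that the penalty never overwhelms the order-$n$ signal of a truly misspecified form.
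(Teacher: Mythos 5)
Your proof is correct and follows essentially the same route as the paper's: both arguments split $\mathcal{S}$ into $\mathcal{S}_u^*$ and its complement, invoke Theorem 1 of \citet{korolev_2018} on the correct side to get $(\xi_s - r_{n,s})/\sqrt{2 r_{n,s}} = O_p(1)$ so that the penalty $\sqrt{r_{n,s}/2}\,(\kappa_n - 1)$ drives $MSC(s) \to -\infty$, and on the misspecified side use divergence of the statistic (the paper cites Theorem 4 of \citet{korolev_2018}, which gives $(\xi_s - r_{n,s})/n \overset{p}{\to} \Delta > 0$ --- exactly the order-$n$ divergence you establish via your $\xi_s \geq c\,n$ claim, which you attribute to Theorems 2--3 instead) together with $\kappa_n r_{n,s}/n \to 0$ to conclude $MSC(s) \to +\infty$. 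Your normalization by $n/\sqrt{2 r_{n,s}}$ is the same device as the paper's multiplication of $MSC(s)$ by $\sqrt{2 r_{n,s}}/n$.

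One difference in Part 2 is worth recording. Your penalty gap $\frac{\kappa_n - 1}{\sqrt{2}}\bigl(\sqrt{r_{n,s_1}} - \sqrt{r_{n,s_2}}\bigr)$ is the algebraically correct expression; the paper instead writes $\sqrt{(r_{n,s_1} - r_{n,s_2})/2}\,(\kappa_n - 1)$, which is a slip (and, since $\sqrt{a} - \sqrt{b} \leq \sqrt{a - b}$, overstates the gap). Consequently your hedge that the gap ``does not vanish in the relevant designs'' is substantive, not cosmetic: if $r_{n,s_1} - r_{n,s_2}$ stays bounded while both $r$'s diverge --- say two specifications differing by finitely many series terms, with $r_{n,s} \asymp n^{1/2}$ and $\kappa_n = \ln n$ --- then $(\kappa_n - 1)(\sqrt{r_{n,s_1}} - \sqrt{r_{n,s_2}}) \asymp n^{-1/4} \ln n \to 0$, the $O_p(1)$ fit terms are not dominated, and the comparison fails under the theorem's literal hypotheses. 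What is actually needed is the extra condition $(r_{n,s_1} - r_{n,s_2})\,\kappa_n / \sqrt{r_{n,s_1}} \to \infty$ for each pair in $\mathcal{S}_u^*$. This is a hole in the theorem's stated assumptions rather than in your argument specifically --- the paper's proof has the same gap, masked by the algebraic slip --- and your write-up at least makes the required condition visible.
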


This theorem makes two statements. First, the selected series form $\hat{s}$ approximates a semiparametric model $c$ that is correctly specified. Second, the selected series form $\hat{s}$ has the lowest dimension among all series forms that can approximate correctly specified models.

A similar result can be obtained for the heteroskedasticity robust model selection procedure:
\begin{theorem}\label{lm_bic_consistency_hc}
Suppose that $\mathcal{C}^*$ is nonempty and Assumptions~\ref{dgp}--\ref{series_approx} hold. Assume that the following rate conditions hold for every series form $s \in \mathcal{S}$ except for the fully nonparametric model:

\begin{align}
\label{rate_cond_r_1_hc_ms} (m_{n,s}/n + m_{n,s}^{-2\alpha}) \zeta(r_{n,s})^2 r_{n,s}^{1/2} &\to 0 \\
\label{rate_cond_r_2_hc_ms} \zeta(r_{n,s}) r_{n,s} / n^{1/2} &\to 0 \\
\label{rate_cond_r_3_hc_ms} \zeta(k_n) m_{n,s}^{1/2} k_n^{1/2}/n^{1/2} &\to 0 \\
\label{rate_cond_r_4_hc_ms} n m_{n,s}^{-2\alpha}/ r_{n,s}^{1/2} &\to 0 \\
\label{rate_cond_r_5_hc_ms} \zeta(r_{n,s})^2/n^{1/2} &\to 0
\end{align}

Also assume that $|| \hat{\Omega}_s - \tilde{\Omega}_s || = o_p(r_{n,s}^{-1/2})$, where $\tilde{\Omega}_s = T_s' \tilde{\Sigma}_s T/n$ and $\hat{\Omega}_s = \tilde{T}_s' \tilde{\Sigma}_s \tilde{T}_s/n$.

Suppose that the series form $\hat{s}$ is selected by minimizing the model selection criterion $MSC_{HC}(s)$ over all series forms $\mathcal{S}$ under consideration:
\[
\hat{s} = \argmin_{s \in \mathcal{S}}{MSC_{HC}(s)} = \argmin_{s \in \mathcal{S}}{\frac{\xi_{HC,s} - r_{n,s} \kappa_n}{\sqrt{2 r_{n,s}}}}
\]

Suppose that $\kappa_n \to \infty$ and $\kappa_n r_{n,s}/n \to 0$ as $n \to \infty$ for all $s \in \mathcal{S}$. Then the selected series form $\hat{s}$ is consistent in the following sense: 

\begin{enumerate}

\item

$\hat{s} \in \mathcal{S}_u^*$.

\item

$m_{n,\hat{s}}(a_n) \leq m_{n,s'}(a_n)$ for any $s' \in \mathcal{S}_u^*$.

\end{enumerate}

\end{theorem}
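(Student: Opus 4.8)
The plan is to show that the criterion $MSC_{HC}(s)$ diverges to $-\infty$ for every correctly specified series form and to $+\infty$ for every misspecified one, so that the minimizer asymptotically lands on the most parsimonious correctly specified form. Since $\mathcal{S}$ is finite, it suffices to analyze each $s$ separately and then take the minimum over finitely many limits, so no uniformity argument beyond a finite union bound is needed. The argument parallels the homoskedastic case of Theorem~\ref{lm_bic_consistency}, with the heteroskedasticity-robust asymptotic normality and consistency results of \citet{korolev_2018} replacing their homoskedastic analogues; the extra hypothesis $|| \hat{\Omega}_s - \tilde{\Omega}_s || = o_p(r_{n,s}^{-1/2})$ is exactly what lets the feasible HC weighting $\hat\Omega_s$ be swapped for its infeasible counterpart $\tilde\Omega_s$ without disturbing the limiting behavior of $\xi_{HC,s}$.

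First, for a correctly specified form $s \in \mathcal{S}_u^*$, I would invoke the HC version of the null result from \citet{korolev_2018}: under Assumptions~\ref{dgp}--\ref{series_approx}, the rate conditions \eqref{rate_cond_r_1_hc_ms}--\eqref{rate_cond_r_5_hc_ms}, and the $\Omega$ condition, one has $(\xi_{HC,s} - r_{n,s})/\sqrt{2 r_{n,s}} \xrightarrow{d} N(0,1)$, so this term is $O_p(1)$. The rate condition \eqref{rate_cond_r_4_hc_ms} guarantees that the small but nonzero approximation error $O(m_{n,s}^{-\alpha})$ present for any correctly specified semiparametric model does not contaminate this limit. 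Writing $MSC_{HC}(s) = (\xi_{HC,s}-r_{n,s})/\sqrt{2 r_{n,s}} - \tfrac{1}{\sqrt 2}\sqrt{r_{n,s}}\,(\kappa_n - 1)$ and using $\kappa_n \to \infty$, the penalty term dominates and $MSC_{HC}(s) \to -\infty$ in probability.

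Second, for a misspecified form $s \notin \mathcal{S}_u^*$, the span of $W_s$ cannot approximate the true regression function $g$ even as $a_n \to \infty$, so $\sup_{x}|g(x) - W_s^{m_{n,s}}(x)'\beta_{1,s}|$ stays bounded away from zero and the portion of this systematic error lying in the $T_s$ directions does not vanish. By the HC consistency-under-the-alternative result of \citet{korolev_2018}, the induced noncentrality forces $\xi_{HC,s}$ to diverge at rate $n$, i.e. $\xi_{HC,s} = \Theta_p(n)$. Since the hypothesis $\kappa_n r_{n,s}/n \to 0$ makes the penalty $r_{n,s}\kappa_n = o(n)$ negligible relative to $\xi_{HC,s}$, the numerator of $MSC_{HC}(s)$ is $\Theta_p(n)$ while the denominator $\sqrt{2 r_{n,s}}$ is $o(n)$, so $MSC_{HC}(s) \to +\infty$ in probability. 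Combining the two cases and using that $\mathcal{C}^*$, hence $\mathcal{S}_u^*$, is nonempty, with probability tending to one the minimizer $\hat s$ must be a correctly specified form, which proves statement~1.

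Finally, for statement~2 I would compare two correctly specified forms $s, s' \in \mathcal{S}_u^*$ with $m_{n,s} < m_{n,s'}$, equivalently $r_{n,s} > r_{n,s'}$. Both LM pieces are $O_p(1)$, so
\[
MSC_{HC}(s) - MSC_{HC}(s') = O_p(1) - \tfrac{1}{\sqrt 2}\bigl(\sqrt{r_{n,s}} - \sqrt{r_{n,s'}}\bigr)(\kappa_n - 1).
\]
Because forms from different model classes differ in dimension enough that $\sqrt{r_{n,s}} - \sqrt{r_{n,s'}}$ stays bounded away from zero, multiplying by $\kappa_n - 1 \to \infty$ makes this term dominate the $O_p(1)$ pieces, so $MSC_{HC}(s) - MSC_{HC}(s') \to -\infty$ and the more parsimonious form is preferred with probability tending to one. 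Minimizing over the finite set $\mathcal{S}_u^*$ then gives $m_{n,\hat s}(a_n) \le m_{n,s'}(a_n)$ for every $s' \in \mathcal{S}_u^*$. The main obstacle is the misspecified case: establishing that $\xi_{HC,s}$ is of order exactly $n$ (so that it truly dominates the penalty $r_{n,s}\kappa_n = o(n)$ rather than merely competing with it) for the growing-dimension HC quadratic form, which is precisely where the condition $|| \hat{\Omega}_s - \tilde{\Omega}_s || = o_p(r_{n,s}^{-1/2})$ and the alternative-distribution theory of \citet{korolev_2018} do the real work.
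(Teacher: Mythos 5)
Your proposal is correct and follows essentially the same route as the paper, whose proof of Theorem~\ref{lm_bic_consistency_hc} simply invokes Theorem 3 of \citet{korolev_2018} to get $(\xi_{HC,s}-r_{n,s})/\sqrt{2r_{n,s}} = O_p(1)$ for $s \in \mathcal{S}_u^*$ and then repeats the argument of Theorem~\ref{lm_bic_consistency}: divergence of $MSC_{HC}$ to $-\infty$ for correctly specified forms via the penalty, divergence to $+\infty$ at rate $n$ under misspecification using $\kappa_n r_{n,s}/n \to 0$, and a penalty-difference comparison between two correctly specified forms. Your version even states the penalty difference $\tfrac{1}{\sqrt{2}}(\sqrt{r_{n,s}}-\sqrt{r_{n,s'}})(\kappa_n-1)$ in algebraically correct form (the paper's displayed difference contains a typo), while flagging the implicit assumption that this gap does not vanish, which the paper also uses without comment.
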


\begin{remark}

In Theorems~\ref{lm_bic_consistency} and~\ref{lm_bic_consistency_hc}, the condition $\kappa_n \to \infty$ is not needed for consistency if $\mathcal{S}$ only contains semiparametric models with $r_{n,s}(a_n) \to \infty$ as $n \to \infty$. However, because I allow for fully parametric models with $r_{n,s}$ fixed, I require $\kappa_n$ to grow.

\end{remark}

\section{Upward Testing Procedure}\label{upward_testing}

In this and next sections I study testing procedures that sequentially apply the LM type specification test from \citet{korolev_2018} with adjusted critical values to different models. The upward testing (UT) procedure considered in this section starts from the most restrictive (e.g. fully parametric) model and carries out tests with progressively smaller $r_{n,s}$, the number of restrictions, until it finds a model that is not rejected.

Let 
\[
t_{s,m_{n,s}} = \frac{\xi_{s,m_{n,s}} - r_{n,s}}{\sqrt{2 r_{n,s}}}
\]
be the value of the LM type $t$ test statistic for model $s$ when we use $m_{n,s}(a_n)$ series terms. Let $\gamma_n > 0$ denote the critical value for the UT test. Define $\hat{m}_{n}^{UT}$ such that 
\[
\min_{s \in \mathcal{S}: m_{n,s} = \hat{m}_{n}^{UT}}{t_{s,m_{n,s}}} \leq \gamma_n \quad \text{and} \quad 
\min_{s \in \mathcal{S}: m_{n,s} < \hat{m}_{n}^{UT}}{t_{s,m_{n,s}}} > \gamma_n
\]

Let
\[
\hat{s}^{UT} = \argmin_{s \in \mathcal{S}: m_{n,s} = \hat{m}_{n}^{UT}}{t_{s,m_{n,s}} }
\]

In other words, $\hat{m}_{n}^{UT}$ is such that the null is rejected for all series forms with $m_{n,s} < \hat{m}_{n}^{UT}$ but is not rejected for at least one series form with  $m_{n,s} = \hat{m}_{n}^{UT}$. $\hat{s}^{UT}$ is such that it minimizes the LM type test statistic among all series forms $s$ with $m_{n,s} =  \hat{m}_{n}^{UT}$.

\begin{theorem}\label{upward_consistency}

Assume that all conditions of Theorem~\ref{lm_bic_consistency} hold. Assume that $\gamma_n \to \infty$ and $\gamma_n = o(n/\sqrt{r_{n,s}})$ for every model $s \in \mathcal{S}$ except for the nonparametric model. Then the selected series form $\hat{s}^{UT}$ is consistent in the following sense: 

\begin{enumerate}

\item

$\hat{s}^{UT} \in \mathcal{S}_u^*$.
\item

$\hat{m}_{n}^{UT}(a_n) \leq m_{n,s'}(a_n)$ for any $s' \in \mathcal{S}_u^*$.

\end{enumerate}

\end{theorem}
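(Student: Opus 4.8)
The plan is to exploit the sharp dichotomy in the behavior of the LM statistic $t_{s,m_{n,s}}$ between correctly specified and misspecified series forms, which is established in \citet{korolev_2018} and already underlies the proof of Theorem~\ref{lm_bic_consistency}. Under the homoskedasticity and rate conditions imported from that theorem, for any $s \in \mathcal{S}_u^*$ the approximation bias is asymptotically negligible relative to $\sqrt{2 r_{n,s}}$ (this is precisely what \eqref{rate_cond_r_4_ms} guarantees), so that $t_{s,m_{n,s}} = O_p(1)$, indeed asymptotically standard normal. Conversely, if $s \notin \mathcal{S}_u^*$ then $W_s$ cannot approximate $g$, leaving a non-vanishing specification error that the additional regressors $T_s$ detect; the resulting non-centrality is of order $n$, so $\xi_s$ is of order $n$ and $t_{s,m_{n,s}}$ diverges at rate $n/\sqrt{r_{n,s}}$. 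Since $\mathcal{S}$ is a fixed finite collection, both statements hold uniformly over $\mathcal{S}$ via a union bound.

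First I would set $m^* := \min_{s \in \mathcal{S}_u^*} m_{n,s}(a_n)$, which is well defined because $\mathcal{C}^*$, and hence $\mathcal{S}_u^*$, is nonempty, and fix some $s^* \in \mathcal{S}_u^*$ attaining it. Every series form with $m_{n,s} < m^*$ lies outside $\mathcal{S}_u^*$ by minimality of $m^*$, hence is misspecified. For each such form the assumption $\gamma_n = o(n/\sqrt{r_{n,s}})$ gives $t_{s,m_{n,s}}/\gamma_n \to \infty$, so $P(t_{s,m_{n,s}} > \gamma_n) \to 1$; a union bound over the finitely many such forms shows that with probability tending to one all of them are rejected. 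On the other hand, since $s^* \in \mathcal{S}_u^*$ we have $t_{s^*,m_{n,s^*}} = O_p(1)$, and because $\gamma_n \to \infty$ this yields $P(t_{s^*,m_{n,s^*}} \le \gamma_n) \to 1$, so at least one form at dimension $m^*$ is not rejected.

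Combining these two facts with the definition of $\hat{m}_n^{UT}$, the upward scan rejects everything at dimensions below $m^*$ and halts at $m^*$, so $\hat{m}_n^{UT} = m^*$ with probability tending to one. This already delivers Part~2, since $m^* = \min_{s' \in \mathcal{S}_u^*} m_{n,s'} \le m_{n,s'}$ for every $s' \in \mathcal{S}_u^*$. For Part~1, I would restrict attention to forms with $m_{n,s} = \hat{m}_n^{UT} = m^*$ and split them into those in $\mathcal{S}_u^*$ (for which $t = O_p(1)$) and those outside (for which $t$ diverges at rate $n/\sqrt{r_{n,s}}$). Because $\hat{s}^{UT}$ minimizes $t_{s,m_{n,s}}$ over this set while $t_{s^*,m_{n,s^*}} = O_p(1)$ is eventually strictly below the diverging statistic of any misspecified competitor at the same dimension, with probability tending to one the minimizer is a correctly specified form, giving $\hat{s}^{UT} \in \mathcal{S}_u^*$.

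The main obstacle is the uniform control of the two regimes, in particular pinning down the $n/\sqrt{r_{n,s}}$ divergence rate under misspecification and matching it against $\gamma_n = o(n/\sqrt{r_{n,s}})$; this is exactly the fixed-alternative power calculation from \citet{korolev_2018}, and it is the reason the conditions of Theorem~\ref{lm_bic_consistency} (especially the homoskedasticity assumption and \eqref{rate_cond_r_4_ms}) are imported wholesale. Everything else reduces to a finite union bound over $\mathcal{S}$ together with the elementary orderings $O_p(1) \ll \gamma_n \ll n/\sqrt{r_{n,s}}$.
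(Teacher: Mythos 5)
Your proposal is correct and follows essentially the same route as the paper's proof: the dichotomy $t_{s,m_{n,s}} = O_p(1)$ for $s \in \mathcal{S}_u^*$ versus $\frac{\sqrt{r_{n,s}}}{n} t_{s,m_{n,s}} \overset{p}{\to} \Delta/\sqrt{2}$ for $s \notin \mathcal{S}_u^*$ (Theorems 1 and 4 of \citet{korolev_2018}), combined with $\gamma_n \to \infty$ and $\gamma_n = o(n/\sqrt{r_{n,s}})$ and a union bound over the finite collection $\mathcal{S}$. If anything, you are slightly more explicit than the paper on the last step of Part 1 --- that the minimizer among forms at dimension $\hat{m}_n^{UT}$ must be correctly specified because any misspecified competitor's statistic diverges --- which the paper leaves implicit.
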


\section{Downward Testing Procedure}\label{downward_testing}

The downward testing (DT) procedure considered in this sections starts from the least restrictive (e.g. fully nonparametric) model and carries out tests with progressively larger $r_{n,s}$, the number of restrictions, until it finds a model that is rejected.

Define
\[
\hat{m}_{n}^{DT} = \min_{m \in \mathcal{M}_0^{**}}{m}, \text{ where } \mathcal{M}_0^{**} = \{ m^{**} : \min_{s \in \mathcal{S}: m_{n,s} =m^{***}}{t_{s,m_{n,s}}} \leq \gamma_n \text{ for all } m^{***} \geq m^{**} \}
\]
and
\[
\hat{s}^{DT} = \argmin_{s \in \mathcal{S}: m_{n,s} = \hat{m}_{n}^{DT}}{t_{s,m_{n,s}} }
\]

In other words, $\mathcal{M}_0^{**}$ is such that for each $m^{**} \in \mathcal{M}_0^{**}$ and each $m^{***} \geq m^{**}$ there is a series form $s$ with $m_{n,s} = m^{***}$ which is not rejected. $\hat{m}_{n}^{DT}$ is the smallest element of that set. $\hat{s}^{DT}$ is such that it minimizes the LM type test statistic among all series forms $s$ with $m_{n,s} =  \hat{m}_{n}^{DT}$.

I need to make as additional assumption to ensure that the DT procedure is consistent.

\begin{assumption}\label{no_gaps}

Let $s^{**}$ be the series form such that $s^{**} \in \mathcal{S}_u^*$ and $m_{n,s^{**}} \leq m_{n,s}$ for any $s \in \mathcal{S}_u^*$. For any $m_{n,s} > m_{n,s^{**}}$, where $s \in \mathcal{S}$, there is a series form $s' \in \mathcal{S}_u^*$ with $m_{n,s'} = m_{n,s}$.

\end{assumption}

Intuitively, because the DT procedure starts with the largest model and tries smaller and smaller $m_{n,s}$, it will only arrive at the smallest correctly specified model if there are no ``gaps'' in the series form space. In other words, there is no $m_{n,s}$ that is larger than the smallest correctly specified one for which there is no correctly specified model.

\begin{theorem}\label{downward_consistency}

Assume that all conditions of Theorem~\ref{lm_bic_consistency} hold. Assume that Assumption~\ref{no_gaps} holds. Assume that $\gamma_n \to \infty$ and $\gamma_n = o(n/\sqrt{r_{n,s}})$ for every model $s \in \mathcal{S}$ except for the nonparametric model. Then the selected series form $\hat{s}^{DT}$ is consistent in the following sense: 

\begin{enumerate}

\item

$\hat{s}^{DT} \in \mathcal{S}_u^*$.

\item

$\hat{m}_{n}^{DT}(a_n) \leq m_{n,s'}(a_n)$ for any $s' \in \mathcal{S}_u^*$.

\end{enumerate}

\end{theorem}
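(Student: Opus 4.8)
The plan is to reduce the statement to two classification facts about the LM type statistics $t_{s,m_{n,s}}$ — exactly the ingredients already used for Theorem~\ref{upward_consistency} — and then to combine them with Assumption~\ref{no_gaps} to control the set $\mathcal{M}_0^{**}$. First I would record the two facts. For any correctly specified form $s \in \mathcal{S}_u^*$, the conditions of Theorem~\ref{lm_bic_consistency} place us in the null setting of \citet{korolev_2018}, so $t_{s,m_{n,s}} = O_p(1)$; since $\gamma_n \to \infty$, such a form is \emph{not} rejected with probability tending to one. For any misspecified form $s \notin \mathcal{S}_u^*$, the approximation bias leaves a systematic component in the residuals, and the same results yield a noncentral quadratic form that forces $t_{s,m_{n,s}}$ to exceed a positive constant times $n/\sqrt{r_{n,s}}$ with probability tending to one; since $\gamma_n = o(n/\sqrt{r_{n,s}})$, such a form \emph{is} rejected with probability tending to one. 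Because $\mathcal{S}$ is finite, a union bound makes both statements hold simultaneously over all $s \in \mathcal{S}$ on an event of probability tending to one, and I would condition on that event throughout.

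Next I would pin down $\hat m_n^{DT}$. Let $m^{**} = m_{n,s^{**}}$ be the smallest size attained in $\mathcal{S}_u^*$. I claim $m^{**} \in \mathcal{M}_0^{**}$: fix any attained size $m^{***} \ge m^{**}$; if $m^{***} = m^{**}$ the form $s^{**}$ has that size and is not rejected, while if $m^{***} > m^{**}$ Assumption~\ref{no_gaps} supplies a form $s' \in \mathcal{S}_u^*$ with $m_{n,s'} = m^{***}$, again not rejected. Hence $\min_{s: m_{n,s} = m^{***}} t_{s,m_{n,s}} \le \gamma_n$ for every $m^{***} \ge m^{**}$, so $m^{**}$ satisfies the defining property of $\mathcal{M}_0^{**}$. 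Conversely, no $m < m^{**}$ lies in $\mathcal{M}_0^{**}$: any form of size $m < m^{**}$ is misspecified (as $m^{**}$ is minimal over $\mathcal{S}_u^*$), hence rejected, so taking $m^{***} = m$ in the definition gives $\min_{s: m_{n,s} = m} t_{s,m_{n,s}} > \gamma_n$, violating the requirement. Therefore $\hat m_n^{DT} = \min \mathcal{M}_0^{**} = m^{**}$ on the conditioning event, which is exactly the second conclusion $\hat m_n^{DT} \le m_{n,s'}$ for all $s' \in \mathcal{S}_u^*$.

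Finally, for the first conclusion I would examine the selection step $\hat s^{DT} = \argmin_{s: m_{n,s} = \hat m_n^{DT}} t_{s,m_{n,s}}$. With $\hat m_n^{DT} = m^{**}$, the minimization runs over all forms of size $m^{**}$, a set that contains the correctly specified form $s^{**}$ with $t_{s^{**},m_{n,s^{**}}} = O_p(1)$ and possibly some misspecified forms whose statistics diverge. The minimizer therefore lands on a correctly specified form with probability tending to one, giving $\hat s^{DT} \in \mathcal{S}_u^*$.

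I expect the only genuinely delicate point to be the necessity and use of Assumption~\ref{no_gaps}. The DT criterion demands that \emph{every} size weakly above the candidate be non-rejected, so a single ``gap'' size $m_g > m^{**}$ at which all forms are misspecified — and thus rejected — would expel every $m \le m_g$ from $\mathcal{M}_0^{**}$ and force the procedure to overshoot to a size above the gap. Assumption~\ref{no_gaps} is precisely what rules this out, and it is the feature that distinguishes the DT argument from the UT argument, where stopping at the first non-rejected size needs no such condition. The remaining steps are routine given the distributional results imported from \citet{korolev_2018}.
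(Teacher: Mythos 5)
Your proposal is correct and follows essentially the same route as the paper's own (much terser) proof: both rest on the two classification facts imported from the proof of Theorem~\ref{upward_consistency} (statistics for $s \in \mathcal{S}_u^*$ are $O_p(1)$ and fall below $\gamma_n \to \infty$, while statistics for $s \notin \mathcal{S}_u^*$ grow like $n/\sqrt{r_{n,s}}$ and exceed $\gamma_n = o(n/\sqrt{r_{n,s}})$), combined with Assumption~\ref{no_gaps} to show that $\mathcal{M}_0^{**}$ contains the minimal size attained in $\mathcal{S}_u^*$ with probability approaching one. Your write-up merely makes explicit what the paper leaves implicit (the union bound over the finite $\mathcal{S}$, the exclusion of sizes below $m_{n,s^{**}}$ from $\mathcal{M}_0^{**}$, and the final argmin step), and your closing remark correctly identifies why Assumption~\ref{no_gaps} is what distinguishes the DT from the UT argument.
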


\section{Simulations}\label{simulations}

I study the finite sample performance of the LM-BIC and UT model selection procedures in the treatment effects estimation setting. I skip the DT procedure because it requires an additional condition. Suppose that the researcher observes an outcome variable $Y_i$, a treatment dummy variable $D_i$, and two control variables $X_i$ and $Z_i$. The researcher wants to estimate the treatment effects parameter $\beta$ in the model:
\[
Y_i = D_i \beta + g(X_i, Z_i) + \varepsilon_i, \quad E[\varepsilon_i | D_i, X_i, Z_i] = 0
\]

The function $g(X_i, Z_i)$ is unknown, and the researcher wants to obtain a consistent and efficient estimate of $\beta$. She wants to choose among five possible models:

\begin{enumerate}

\item

A fully parametric linear model:
\begin{align}\label{model_p}
Y_i &= D_i \beta + X_i \alpha_1 + Z_i \alpha_2 + \varepsilon_i
\end{align}

\item

A semiparametric partially linear model that is nonparametric in $X_i$:
\begin{align}\label{model_sp_x}
Y_i = D_i \beta + h_1(X_i) + Z_i \alpha_2 + \varepsilon_i
\end{align}

\item

A semiparametric partially linear model that is nonparametric in $Z_i$:
\begin{align}\label{model_sp_z}
Y_i = D_i \beta + X_i \alpha_1 + h_2(Z_i) + \varepsilon_i
\end{align}

\item

A semiparametric additive model that is nonparametric in $X_i$ and $Z_i$ but does not include the interaction term:
\begin{align}\label{model_sp_add}
Y_i = D_i \beta + h_1(X_i)+ h_2(Z_i) + \varepsilon_i
\end{align}

\item

A fully nonparametric model:
\begin{align}\label{modea_np}
Y_i &= D_i \beta + g(X_i,Z_i) + \varepsilon_i
\end{align}

\end{enumerate}

I generate  I consider five data generating processes, each of which corresponds to one of the models above:

\begin{enumerate}

\item

A parametric DGP:
\[
Y_i =  2 D_i + 1 - X_i + 1.5 Z_i + \varepsilon_i
\]

\item

A semiparametric partially linear DGP:
\[
Y_i = 2 D_i + 1 - X_i + 0.25 \exp(X_i - 2) + 1.5 Z_i + \varepsilon_i
\]

\item

Another semiparametric partially linear DGP:
\[
Y_i =  2 D_i + 1 - X_i + 1.5 Z_i + 0.5 \sin(2 (Z_i - 3)) + \varepsilon_i
\]

\item

A semiparametric additive DGP:
\[
Y_i = 2 D_i + 1 - X_i + 0.25 \exp(X_i - 2) + 1.5 Z_i + 0.5 \sin(2 (Z_i - 3)) + \varepsilon_i
\]

\item

A fully nonparametric DGP:
\[
Y_i = 2 D_i + 1 - X_i + 0.25 \exp(X_i - 2) + 1.5 Z_i + 0.5 \sin(2 (Z_i - 3)) + 0.2 X_i Z_i + \sin(X_i Z_i) + \varepsilon_i
\]

\end{enumerate}

In order to use the LM-BIC model selection procedure to choose a model, the researcher has to estimate every model by series methods, replacing all unknown functions with their finite series expansions. In this section, I use the LM-BIC criterion with $\kappa_n = \ln{n}$ and the upward testing procedure with $\gamma_n = 0.05 \sqrt{n}$ to select the best model. I consider two sample sizes, $n = 1,000$ and $n=5,000$. I use power series and use $a_n = 1.5 n^{0.15}$ series terms to approximate each one-dimensional unknown function. For the two-dimensional function, I use $a_n^2$ series terms.

Tables~\ref{model_selection_probs_1}--\ref{beta_properties_5} show the simulated selection probabilities for each of the five data generating processes and the means and mean squared errors of the estimates of $\beta$ in different models. As we can see, the LM-BIC procedure does very well with the sample size $n=5,000$, selecting the ``correct'' model (i.e. the most parsimonious of all correctly specified models) at least 99\% of the time under all five DGPs. The UT procedure does not perform as well, selecting the right model only 83\% of the time under the semiparametric additive DGP. Both procedures have hard time selecting the correct model under the additive DGP with $n=1,000$, selecting the semiparametric additive model less than 25\% of the time. The UT procedure generally selects the right model less frequently than the LM-BIC procedure, but there is one exception. Under the nonparametric DGP with $n=1,000$, the LM-BIC procedure selects the right model 81\% of the time, while the UT procedure always selects it.

Given these simulated model selection probabilities, it is not surprising that post-selection estimates of $\beta$ based on the LM-BIC procedure have lower MSE than the ones based on the UT procedure. Moreover, the MSE of the estimates of $\beta$ based on the LM-BIC is very close to that of the oracle estimator, especially with the sample size $n=5,000$. However, under the nonparametric DGP with $n=1,000$, the post-selection estimate of $\beta$ based on the UT procedure has noticeably lower MSE than the one based on the LM-BIC.

Overall, these results suggest that in most scenarios, the LM-BIC procedure slightly outperforms the UT procedure, but both seem to have good finite sample properties.

\section{Empirical Example}\label{empirical_example}

In this section, I apply the proposed model selection method to the Canadian household gasoline demand dataset from \citet{yatchew_no_2001}. They estimate gasoline demand ($y$), measured as the logarithm of the total distance driven in a given month, as a function of the logarithm of the gasoline price ($PRICE$), the logarithm of the household income ($INCOME$), the logarithm of the age of the primary driver of the car ($AGE$), and other variables ($z$), which include the logarithm of the number of drivers in the household ($DRIVERS$), the logarithm of the household size ($HHSIZE$), an urban dummy, a dummy for singles under 35 years old, and monthly dummies.

\citet{yatchew_no_2001} used several parametric and semiparametric specifications to estimate gasoline demand. The inadequacy of parametric models and importance of semiparametric ones in gasoline demand estimation was first pointed out by \citet{hausman_newey_1995} and \citet{schmalensee_stoker_1999}. \citet{yatchew_no_2001} follow these papers in using semiparametric specifications for gasoline demand. In \citet{korolev_2018}, I apply the semiparametric LM type specification test to one of their models, which is nonparametric in $AGE$ but is parametric in all other variables, and find no evidence against it. 

In this paper, I use a different approach. I consider seven candidate models:

\begin{enumerate}

\item

A fully parametric model (``P''):
\begin{align}\label{gasoline_p}
y = \alpha_1 PRICE + \alpha_2 INCOME + \gamma_0 + \gamma_1 AGE + z' \beta + \varepsilon
\end{align}

\item

A semiparametric model which is nonparametric in $AGE$ but parametric in all other variables (``SP-$AGE$''):
\begin{align}\label{gasoline_sp_age}
y = \alpha_1 PRICE + \alpha_2 INCOME + g(AGE) + z' \beta + \varepsilon
\end{align}

\item

A semiparametric model which is nonparametric in $PRICE$ but parametric in all other variables (``SP-$PRICE$''):
\begin{align}\label{gasoline_sp_price}
y = \gamma_1 AGE + \alpha_2 INCOME + g(PRICE) + z' \beta + \varepsilon
\end{align}

\item

A semiparametric model which is nonparametric in $INCOME$ but parametric in all other variables (``SP-$INCOME$''):
\begin{align}\label{gasoline_sp_income}
y = \gamma_1 AGE + \alpha_1 PRICE + g(INCOME) + z' \beta + \varepsilon
\end{align}

\item

A semiparametric additive model which is nonparametric in $AGE$, $PRICE$, and $INCOME$ separately, but does not include any interaction terms between them (``SP-Additive''):
\begin{align}\label{gasoline_sp_add3}
y = g_1(AGE) + g_2(PRICE) + g_3(INCOME) + z' \beta + \varepsilon
\end{align}

\item

A semiparametric model which is nonparametric in $AGE$ and $PRICE$ but parametric in all other variables (``SP-$AGE \& PRICE$''):
\begin{align}\label{gasoline_sp_age_price}
y = \alpha_2 INCOME + g(AGE, PRICE) + z' \beta + \varepsilon
\end{align}

\item

A nonparametric (in all continuous variables) model (``NP''):
\begin{align}\label{gasoline_np}
y = h(PRICE, AGE, INCOME, z_1) + z_2' \lambda + \varepsilon
\end{align}

\end{enumerate}

I estimate each of these seven models using series methods, replacing all unknown functions with their finite series expansions, and compute the value of the model selection criterion as in Equation~\ref{msc_criterion}, both for the homoskedastic and heteroskedastic versions of the LM type test statistic. Following \citet{korolev_2018}, to construct the regressors $P^{k_n}$ used to evaluate the test statistic, I use $a_n = 3$ power series terms in $AGE$, $PRICE$, and $INCOME$, $j_n = 2$ power series terms in $DRIVERS$ and $HHSIZE$, and the set of dummies discussed above. I then use pairwise interactions (tensor products) of univariate power series, and add all possible three, four, and five element interactions between $AGE$, $PRICE$, $INCOME$, $DRIVERS$, and $HHSIZE$, without using higher powers in these interaction terms to avoid multicollinearity. Table~\ref{empirical_example_results} presents the results.

As we can see, both the homoskedastic and heteroskedastic versions of the LM-BIC model selection criterion choose the sepimarametric model~\ref{gasoline_sp_age}, which is nonparametric in $AGE$ but parametric in all other variables. This is in line with the conclusions of \citet{yatchew_no_2001}, who conclude that it is $AGE$ that nonlinearly affects gasoline demand, and of \citet{korolev_2018}, who finds no evidence against model~\ref{gasoline_sp_age} when tested against model~\ref{gasoline_np}.

However, if I use the UT or DT procedure with $\gamma_n = 0.05 \sqrt{n} = 3.95$ assuming that the errors are homoskedastic, then I end up selecting the fully parametric model. In contrast, if I assume that the errors are heteroskedastic, then the parametric model is rejected, and the semiparametric model~\ref{gasoline_sp_age} is selected. 

This example illustrates a couple of potential issues with the proposed model selection procedures. First, even though asymptotically all procedures are consistent, in finite samples they may select different models. More specifically, if one assumes that the errors are homoskedastic, the parametric model ends up being selected by the UT and DT procedures, despite being rejected by the LM type specification test and not being selected by the LM-BIC procedure.

Second, the critical value $\gamma_n$ is somewhat arbitrary, and the UT and DT procedures may be very sensitive to the choice of critical values. If I chose a slightly higher $\gamma_n$, the parametric model would be selected by the UT and DT procedures even in the heteroskedastic case. If I chose a slightly lower value of $\gamma_n$, the parametric model would not be selected by these procedures even in the homoskedastic case. 

While these two issues are important, my model selection procedure is not the only one affected by them. In fact, many existing model selection procedures are heavily dependent on tuning parameters choice, and developing ways to choose tuning parameters optimally in model selection may be an important avenue for future research.

\section{Conclusion}\label{conclusion}

In this paper, I develop a new model selection procedure for parametric, semiparametric, and nonparametric models. It combines the semiparametric LM type test from \citet{korolev_2018} and the BIC model selection procedure. The LM-BIC procedure aims to select the model with the lowest value of the semiparametric LM type test statistic but additionally penalizes complicated models. I prove that the resulting procedure is consistent, in the sense that it selects the lowest dimensional correctly specified model. I also develop an upward testing and downward testing procedures based on the semiparametric LM type specification test and prove their consistency. The proposed LM-BIC and UT procedures perform well in simulations.

In future work, it would be interesting to study how to select the number of terms in series expansions that are used to estimate different models and how to optimally choose $\kappa_n$, the penalty parameter for the LM-BIC, and $\gamma_n$, the critical value for the UT and DT procedures. It would also be important to develop methods for inference on post-selection estimates.

\clearpage

\renewcommand\thesection{\Alph{section}}

\renewcommand{\thetheorem}{A.\arabic{theorem}}

\renewcommand{\thelemma}{A.\arabic{lemma}}

\renewcommand{\theassumption}{A.\arabic{assumption}}

\renewcommand{\theremark}{A.\arabic{remark}}

\setcounter{theorem}{0}

\setcounter{lemma}{0}

\setcounter{assumption}{0}

\setcounter{remark}{0}

\begin{appendices}

\section{Tables and Figures}\label{appendix_tables_figures}

\begin{table}[h]
\begin{center}
\caption{Model Selection Probabilities, DGP 1}\label{model_selection_probs_1}
\begin{tabular}{l | c c c c c}
Model & P & SP-X & SP-Z & SP-ADD & NP \\ \hline \hline
& \multicolumn{5}{c}{LM-BIC} \\
Selection probability, $n=1,000$ & 0.9860 & 0.0055 & 0.0085 & 0.0000 & 0.0000 \\
Selection probability, $n=5,000$ & 0.9980 & 0.0015 & 0.0005 & 0.0000 & 0.0000 \\ \\ \hline\hline
& \multicolumn{5}{c}{UT Procedure} \\
Selection probability, $n=1,000$ & 0.9380	& 0.0145	& 0.0105	& 0.0020	& 0.0350 \\
Selection probability, $n=5,000$ &  0.9965 &	0.0010 & 	0.0010 &	0.0000 &	0.0015\\
\end{tabular}
\end{center}
\footnotesize{The table shows the simulated probabilities of selecting each of the five models under consideration under DGP 1, which corresponds to the parametric model $Y_i = 1 + 2 D_i - X_i + 1.5 Z_i + \varepsilon_i$.
Model P refers to the fully parametric model~\ref{model_p}. Model SP-X corresponds to the semiparametric model ~\ref{model_sp_x}. Model SP-Z corresponds to the semiparametric model ~\ref{model_sp_z}. Model SP-ADD corresponds to the semiparametric model ~\ref{model_sp_add}. Model NP corresponds to the fully nonparametric model ~\ref{modea_np}. Results are based on $B=2,000$ simulation draws.}
\end{table}

\begin{table}[h]
\begin{center}
\caption{Mean and MSE of $\hat{\beta}$, DGP 1}\label{beta_properties_1}
\begin{tabular}{l | c c c c c cc }
Model & P & SP-X & SP-Z & SP-ADD & NP & SM & UT \\ \hline \hline
 &  \multicolumn{7}{c}{$n=1,000$} \\
Mean of $\hat{\beta}$ & 2.003 & 2.003 & 2.004 & 2.004 & 2.007 & 2.002 & 2.002 \\
MSE of $\hat{\beta}$ & 0.0221 & 0.0255 & 0.0261 & 0.0278 & 0.0370 & 0.0225 & 0.0233 \\
\\ \hline \hline
 &  \multicolumn{7}{c}{$n=5,000$} \\
Mean of $\hat{\beta}$ & 2.000 & 2.001 & 2.001 & 2.001 & 2.001 & 2.000 & 2.000 \\
MSE of $\hat{\beta}$ & 0.0046 & 0.0055 & 0.0054 & 0.0061 & 0.0076 & 0.0046 & 0.0046
\end{tabular}
\end{center}
\footnotesize{The table shows the simulated means and mean squared errors of the estimates of $\beta$ for the five models under consideration under DGP 1, which corresponds to the model $Y_i = 1 + 2 D_i - X_i + 1.5 Z_i + \varepsilon_i$. The true value is $\beta = 2$. Model P refers to the fully parametric model~\ref{model_p}. Model SP-X corresponds to the semiparametric model ~\ref{model_sp_x}. Model SP-Z corresponds to the semiparametric model ~\ref{model_sp_z}. Model SP-ADD corresponds to the semiparametric model ~\ref{model_sp_add}. Model NP corresponds to the fully nonparametric model ~\ref{modea_np}. Model SM corresponds to the model selected by the LM-BIC procedure.  Model UT corresponds to the model selected by the UT procedure. Results are based on $B=2,000$ simulation draws.}
\vspace{-2cm}\end{table}

\clearpage

\begin{table}[h]
\begin{center}
\caption{Model Selection Probabilities, DGP 2}\label{model_selection_probs_2}
\begin{tabular}{l | c c c c c}
Model & P & SP-X & SP-Z & SP-ADD & NP \\ \hline \hline
& \multicolumn{5}{c}{LM-BIC} \\
Selection probability, $n=1,000$ & 0.2890 & 0.6830 & 0.0265 & 0.0015 & 0.0000 \\
Selection probability, $n=5,000$ & 0.0000	& 1.0000 & 0.0000 & 0.0000 & 0.0000 \\ \\ \hline\hline
& \multicolumn{5}{c}{UT Procedure} \\
Selection probability, $n=1,000$ & 0.3905	& 0.5180	& 0.0235	& 0.0130	& 0.0550 \\
Selection probability, $n=5,000$ & 0.0100	& 0.9875	& 0.0000	& 0.0020	& 0.0005 \\
\end{tabular}
\end{center}
\footnotesize{The table shows the simulated probabilities of selecting each of the five models under consideration under DGP 2, which corresponds to the model $Y_i = 1 + 2 D_i - X_i + 0.25 \exp(X_i - 2) + 1.5 Z_i + \varepsilon_i$. Model P refers to the fully parametric model~\ref{model_p}. Model SP-X corresponds to the semiparametric model ~\ref{model_sp_x}. Model SP-Z corresponds to the semiparametric model ~\ref{model_sp_z}. Model SP-ADD corresponds to the semiparametric model ~\ref{model_sp_add}. Model NP corresponds to the fully nonparametric model ~\ref{modea_np}. Results are based on $B=2,000$ simulation draws.}
\end{table}

\begin{table}[h]
\begin{center}
\caption{Mean and MSE of $\hat{\beta}$, DGP 2}\label{beta_properties_2}
\begin{tabular}{l | c c c c c c c }
Model & P & SP-X & SP-Z & SP-ADD & NP & SM & UT \\ \hline \hline
 &  \multicolumn{7}{c}{$n=1,000$} \\
Mean of $\hat{\beta}$ & 1.948 & 2.003 & 1.962 & 2.003 & 2.004 & 1.992 & 1.986 \\
MSE of $\hat{\beta}$ & 0.0263 & 0.0274 & 0.0291 & 0.0300 & 0.0378 & 0.0273 & 0.0285 \\
\\ \hline \hline
 &  \multicolumn{7}{c}{$n=5,000$} \\
Mean of $\hat{\beta}$ & 1.945 & 1.999 & 1.959 & 1.999 & 1.998 & 1.999 & 1.998 \\
MSE of $\hat{\beta}$ & 0.0078 & 0.0054 & 0.0070 & 0.0058 & 0.0072 & 0.0054 & 0.0054
\end{tabular}
\end{center}
\footnotesize{The table shows the simulated means and mean squared errors of the estimates of $\beta$ for the five models under consideration under DGP 2, which corresponds to the model $Y_i = 1 + 2 D_i - X_i + 0.25 \exp(X_i - 2) + 1.5 Z_i + \varepsilon_i$. The true value is $\beta = 2$. Model P refers to the fully parametric model~\ref{model_p}. Model SP-X corresponds to the semiparametric model ~\ref{model_sp_x}. Model SP-Z corresponds to the semiparametric model ~\ref{model_sp_z}. Model SP-ADD corresponds to the semiparametric model ~\ref{model_sp_add}. Model NP corresponds to the fully nonparametric model ~\ref{modea_np}. Model SM corresponds to the model selected by the LM-BIC procedure. Model UT corresponds to the model selected by the UT procedure. Results are based on $B=2,000$ simulation draws.}
\vspace{-2cm}\end{table}

\clearpage

\begin{table}[h]
\begin{center}
\caption{Model Selection Probabilities, DGP 3}\label{model_selection_probs_3}
\begin{tabular}{l | c c c c c}
Model & P & SP-X & SP-Z & SP-ADD & NP \\ \hline \hline
& \multicolumn{5}{c}{LM-BIC} \\
Selection probability, $n=1,000$ & 0.4685	& 0.0065 & 0.5235 & 0.0015 & 0.0000 \\
Selection probability, $n=5,000$ & 0.0010 & 0.0000 & 0.9985 & 0.0005 & 0.0000 \\ \\ \hline\hline
& \multicolumn{5}{c}{UT Procedure} \\
Selection probability, $n=1,000$ &  0.5000 & 0.0100	& 0.4110	& 0.0160	& 0.0630 \\
Selection probability, $n=5,000$ & 0.0520	& 0.0000 & 0.9465 &	 0.0005 &	0.0010
\end{tabular}
\end{center}
\footnotesize{The table shows the simulated probabilities of selecting each of the five models under consideration under DGP 3, which corresponds to the model $Y_i = 1 + 2 D_i - X_i + 1.5 Z_i + 0.5 \sin(2 (Z_i - 3)) + \varepsilon_i$. Model P refers to the fully parametric model~\ref{model_p}. Model SP-X corresponds to the semiparametric model ~\ref{model_sp_x}. Model SP-Z corresponds to the semiparametric model ~\ref{model_sp_z}. Model SP-ADD corresponds to the semiparametric model ~\ref{model_sp_add}. Model NP corresponds to the fully nonparametric model ~\ref{modea_np}. Results are based on $B=2,000$ simulation draws.}
\end{table}

\begin{table}[h]
\begin{center}
\caption{Mean and MSE of $\hat{\beta}$, DGP 3}\label{beta_properties_3}
\begin{tabular}{l | c c c c c c c}
Model & P & SP-X & SP-Z & SP-ADD & NP & SM & UT \\ \hline \hline
 &  \multicolumn{7}{c}{$n=1,000$} \\
Mean of $\hat{\beta}$ & 2.203 & 2.152 & 1.997 & 2.001 & 1.999 & 2.071 & 2.083 \\
MSE of $\hat{\beta}$ & 0.0648 & 0.0506 & 0.0272 & 0.0298 & 0.0385 & 0.0448 & 0.0484 \\
\\ \hline \hline
 &  \multicolumn{7}{c}{$n=5,000$} \\
Mean of $\hat{\beta}$ & 2.206 & 2.158 & 2.001 & 2.000 & 2.003 & 2.001 & 2.009 \\
MSE of $\hat{\beta}$ & 0.0466 & 0.0301 & 0.0047 & 0.0055 & 0.0072 & 0.0048 & 0.0064
\end{tabular}
\end{center}
\footnotesize{The table shows the simulated means and mean squared errors of the estimates of $\beta$ for the five models under consideration under DGP 3, which corresponds to the model $Y_i = 1 + 2 D_i - X_i + 1.5 Z_i + 0.5 \sin(2 (Z_i - 3)) + \varepsilon_i$. The true value is $\beta = 2$. Model P refers to the fully parametric model~\ref{model_p}. Model SP-X corresponds to the semiparametric model ~\ref{model_sp_x}. Model SP-Z corresponds to the semiparametric model ~\ref{model_sp_z}. Model SP-ADD corresponds to the semiparametric model ~\ref{model_sp_add}. Model NP corresponds to the fully nonparametric model ~\ref{modea_np}. Model SM corresponds to the model selected by the LM-BIC procedure. Model UT corresponds to the model selected by the UT procedure. Results are based on $B=2,000$ simulation draws.}
\vspace{-2cm}\end{table}

\clearpage

\begin{table}[h]
\begin{center}
\caption{Model Selection Probabilities, DGP 4}\label{model_selection_probs_4}
\begin{tabular}{l | c c c c c}
Model & P & SP-X & SP-Z & SP-ADD & NP \\ \hline \hline
& \multicolumn{5}{c}{LM-BIC} \\
Selection probability, $n=1,000$ & 0.1930 & 0.2895 & 0.2855 & 0.2320 & 0.0000 \\
Selection probability, $n=5,000$ & 0.0000 & 0.0025 & 0.0015 & 0.9960 & 0.0000 \\ \\ \hline\hline
& \multicolumn{5}{c}{UT Procedure} \\
Selection probability, $n=1,000$ & 0.1450	& 0.2830	& 0.2795	& 0.2305	& 0.0620 \\
Selection probability, $n=5,000$ & 0.0000	& 0.0815	& 0.0870	& 0.8300	& 0.0015 \\
\end{tabular}
\end{center}
\footnotesize{The table shows the simulated probabilities of selecting each of the five models under consideration under DGP 4, which corresponds to the model $Y_i = 1 + 2 D_i - X_i + 0.25 \exp(X_i - 2) + 1.5 Z_i + 0.5 \sin(2 (Z_i - 3)) + \varepsilon_i$. Model P refers to the fully parametric model~\ref{model_p}. Model SP-X corresponds to the semiparametric model ~\ref{model_sp_x}. Model SP-Z corresponds to the semiparametric model ~\ref{model_sp_z}. Model SP-ADD corresponds to the semiparametric model ~\ref{model_sp_add}. Model NP corresponds to the fully nonparametric model ~\ref{modea_np}. Results are based on $B=2,000$ simulation draws.}
\end{table}

\begin{table}[h]
\begin{center}
\caption{Mean and MSE of $\hat{\beta}$, DGP 4}\label{beta_properties_4}
\begin{tabular}{l | c c c c c c c}
Model & P & SP-X & SP-Z & SP-ADD & NP & SM & UT \\ \hline \hline
 &  \multicolumn{7}{c}{$n=1,000$} \\
Mean of $\hat{\beta}$ & 2.149 & 2.154 & 1.960 & 2.005 & 2.002 & 2.048 &  2.045 \\
MSE of $\hat{\beta}$ & 0.0452 & 0.0506 & 0.0280 & 0.0291 & 0.0370 & 0.0399 & 0.0404 \\
\\ \hline \hline
 &  \multicolumn{7}{c}{$n=5,000$} \\
Mean of $\hat{\beta}$ & 2.149 & 2.156 & 1.959 & 1.998 & 1.998 & 1.999 & 2.006 \\
MSE of $\hat{\beta}$ & 0.0271 & 0.0294 & 0.0071 & 0.0058 & 0.0075 & 0.0058 & 0.0079
\end{tabular}
\end{center}
\footnotesize{The table shows the simulated means and mean squared errors of the estimates of $\beta$ for the five models under consideration under DGP 4, which corresponds to the model $Y_i = 1 + 2 D_i - X_i + 0.25 \exp(X_i - 2) + 1.5 Z_i + 0.5 \sin(2 (Z_i - 3)) + \varepsilon_i$. The true value is $\beta = 2$. Model P refers to the fully parametric model~\ref{model_p}. Model SP-X corresponds to the semiparametric model ~\ref{model_sp_x}. Model SP-Z corresponds to the semiparametric model ~\ref{model_sp_z}. Model SP-ADD corresponds to the semiparametric model ~\ref{model_sp_add}. Model NP corresponds to the fully nonparametric model ~\ref{modea_np}. Model SM corresponds to the model selected by the LM-BIC procedure. Model UT corresponds to the model selected by the UT procedure. Results are based on $B=2,000$ simulation draws.}
\vspace{-2cm}\end{table}

\clearpage

\begin{table}[h]
\begin{center}
\caption{Model Selection Probabilities, DGP 5}\label{model_selection_probs_5}
\begin{tabular}{l | c c c c c}
Model & P & SP-X & SP-Z & SP-ADD & NP \\ \hline \hline
& \multicolumn{5}{c}{LM-BIC} \\
Selection probability, $n=1,000$ & 0.0200	& 0.0020	& 0.1640	& 0.0030	& 0.8110 \\
Selection probability, $n=5,000$ & 0.0000	& 0.0000	& 0.0000	& 0.0000	& 1.0000 \\ \\ \hline\hline
& \multicolumn{5}{c}{UT Procedure} \\
Selection probability, $n=1,000$ & 0.0000	& 0.0000	& 0.0000	& 0.0000	& 1.0000  \\
Selection probability, $n=5,000$ & 0.0000	& 0.0000	& 0.0000	& 0.0000	& 1.0000 
\end{tabular}
\end{center}
\footnotesize{The table shows the simulated probabilities of selecting each of the five models under consideration under DGP 5, which corresponds to the model $Y_i = 1 + 2 D_i - X_i + 0.25 \exp(X_i - 2) + 1.5 Z_i + 0.5 \sin(2 (Z_i - 3)) + 0.2 X_i Z_i + \sin(X_i Z_i) + \varepsilon_i$. Model P refers to the fully parametric model~\ref{model_p}. Model SP-X corresponds to the semiparametric model ~\ref{model_sp_x}. Model SP-Z corresponds to the semiparametric model ~\ref{model_sp_z}. Model SP-ADD corresponds to the semiparametric model ~\ref{model_sp_add}. Model NP corresponds to the fully nonparametric model ~\ref{modea_np}. Results are based on $B=2,000$ simulation draws.}
\end{table}

\begin{table}[h]
\begin{center}
\caption{Mean and MSE of $\hat{\beta}$, DGP 5}\label{beta_properties_5}
\begin{tabular}{l | c c c c c c c}
Model & P & SP-X & SP-Z & SP-ADD & NP & SM  & UT\\ \hline \hline
 &  \multicolumn{7}{c}{$n=1,000$} \\
Mean of $\hat{\beta}$ & 1.612	& 1.511	& 1.227	& 1.234	& 1.879 & 1.773 & 1.879 \\
MSE of $\hat{\beta}$ & 0.1795	& 0.2727	& 0.6301	& 0.6225	& 0.0543 & 0.1558 & 0.0543 \\
\\ \hline \hline
 &  \multicolumn{7}{c}{$n=5,000$} \\
Mean of $\hat{\beta}$ &  0.619	& 1.546	& 1.232	& 1.315	& 1.993 & 1.993 & 1.993 \\
MSE of $\hat{\beta}$ & 0.1508	& 0.2127	& 0.5965	& 0.4766	& 0.0077 & 0.0077 & 0.0077
\end{tabular}
\end{center}
\footnotesize{The table shows the simulated means and mean squared errors of the estimates of $\beta$ for the five models under consideration under DGP 5, which corresponds to the model $Y_i = 1 + 2 D_i - X_i + 0.25 \exp(X_i - 2) + 1.5 Z_i + 0.5 \sin(2 (Z_i - 3)) + 0.2 X_i Z_i + \sin(X_i Z_i) + \varepsilon_i$. The true value is $\beta = 2$. Model P refers to the fully parametric model~\ref{model_p}. Model SP-X corresponds to the semiparametric model ~\ref{model_sp_x}. Model SP-Z corresponds to the semiparametric model ~\ref{model_sp_z}. Model SP-ADD corresponds to the semiparametric model ~\ref{model_sp_add}. Model NP corresponds to the fully nonparametric model ~\ref{modea_np}. Model SM corresponds to the model selected by the LM-BIC procedure. Model UT corresponds to the model selected by the UT procedure. Results are based on $B=2,000$ simulation draws.}
\vspace{-2cm}\end{table}

\clearpage

\begin{table}[h]
\begin{center}
\caption{Model Selection in Gasoline Demand Estimation}\label{empirical_example_results}
\begin{tabular}{l | c | c c | c c }
& & \multicolumn{2}{c}{Homoskedastic Version} & \multicolumn{2}{c}{Heteroskedastic Version} \\
Model & $r$ & $t$-stat & $MSC$ & $t$-stat HC & $MSC$ HC \\ \hline\hline
P & 91 & 3.891 & -48.3 & 4.065 & -48.1 \\
SP-$AGE$ & 89 & 0.227 & \textbf{-51.4} & 0.526 & \textbf{-51.1} \\
SP-$PRICE$ & 89 & 3.983 & -47.6 & 4.058 & -47.6 \\
SP-$INCOME$ & 89 & 3.890 & -47.7 & 4.143 & -47.5 \\
SP-Additive & 85 & 0.258 & -50.2 & 0.531 & -49.9 \\
SP-$AGE \& PRICE$ & 78 & 0.647 & -47.7 & 0.700 & -47.6 \\
NP & 0 & 0 & 0 & 0 & 0
\end{tabular}
\end{center}
\footnotesize{The table shows the values of the LM type test statistic $t$ (given in equation 3.7 in \citet{korolev_2018}) and its heteroskedasticity robust version (given in equation 3.9 in \citet{korolev_2018}), as well as the corresponding values of the model selection criterion from equation~\ref{msc_criterion}, for the models from Section~\ref{empirical_example}. Model ``P'' corresponds to equation~\ref{gasoline_p}. Model ``SP-$AGE$'' corresponds to equation~\ref{gasoline_sp_age}. Model ``SP-$PRICE$'' corresponds to equation~\ref{gasoline_sp_price}. Model ``SP-$INCOME$'' corresponds to equation~\ref{gasoline_sp_income}. Model ``SP-Additive'' corresponds to equation~\ref{gasoline_sp_add3}. Model ``SP-$AGE \& PRICE$'' corresponds to equation~\ref{gasoline_sp_age_price}. Model ``NP'' corresponds to equation~\ref{gasoline_np}. The minimum values of the model selection criterion are shown in bold and correspond to the selected model. The column $r$ shows the number of restrictions associated with different models. The sample size is $n=6,230$.}
\end{table}

\clearpage

\section{Proofs}\label{appendix_proofs}

\begin{proof}[Proof of Theorem~\ref{lm_bic_consistency}.]

First, suppose that $s \in \mathcal{S}_u^*$. Then, as shown in Theorem 1 in \citet{korolev_2018}, as $n \to \infty$,
\[
\frac{\xi_s - r_{n,s}}{\sqrt{2 r_{n,s}}} = O_p(1)
\]

Thus,
\[
MSC(s) = \frac{\xi - r_{n,s}}{\sqrt{2 r_{n,s}}} - \sqrt{\frac{r_{n,s}}{2}} (\kappa_n - 1) \to -\infty,
\]
because $\sqrt{\frac{r_{n,s}}{2}} (\kappa_n - 1) \to \infty$.

Next, suppose that $s \notin \mathcal{S}_u^*$ (i.e. $s \in \mathcal{S}^*(c)$ and $c \notin \mathcal{C}^*$, or $s \notin \mathcal{S}^*(c)$ for any $c \in \mathcal{C}$). Then the series form $s$ has a misspecification error that does not vanish even asymptotically. Therefore, as shown in Theorem 4 in \citet{korolev_2018},
\[
\frac{\sqrt{r_{n,s}}}{n} \frac{\xi - r_{n,s}}{\sqrt{2 r_{n,s}}} \overset{p}{\to} \Delta/\sqrt{2},
\]

Thus,
\[
MSC(s) \frac{\sqrt{2 r_{n,s}}}{n} = \sqrt{2} \frac{\sqrt{r_{n,s}}}{n} \frac{\xi - r_{n,s}}{\sqrt{2 r_{n,s}}} - \frac{r_{n,s}(\kappa_n - 1)}{n} \overset{p}{\to} \Delta,
\]
because $\kappa_n = o(n/r_{n,s})$, and thus $MSC(s) \to +\infty$.

Thus, the series form $\hat{s}$ chosen by the model selection procedure satisfies $s \in \mathcal{S}_u^*$ w.p.a. 1. 

Next, suppose that there are two series forms $s_1 \in \mathcal{S}_u^*$ and $s_2 \in \mathcal{S}_u^*$, but $m_{n,s_1}(a_n) < m_{n,s_2}(a_n)$ and hence $r_{n,s_1}(a_n) > r_{n,s_2}(a_n)$ (i.e. the series form $s_1$ is more restrictive). Then
\[
MSC(s_1) - MSC(s_2) = \frac{\xi - r_{n,s_1}}{\sqrt{2 r_{n,s_1}}} - \frac{\xi - r_{n,s_2}}{\sqrt{2 r_{n,s_2}}} - \sqrt{\frac{r_{n,s_1} - r_{n,s_2}}{2}} (\kappa_n - 1) \to -\infty
\]

Thus, asymptotically $s_1$ will be preferred to $s_2$. Therefore, $m_{n,\hat{s}}(a_n) \leq m_{n,s'}(a_n)$ w.p.a. 1 for any $s' \in \mathcal{S}_u^*$.

\end{proof}

\begin{proof}[Proof of Theorem~\ref{lm_bic_consistency_hc}.]

Suppose that $s \in \mathcal{S}_u^*$. Then, as shown in Theorem 3 in \citet{korolev_2018}, as $n \to \infty$,
\[
\frac{\xi_{HC,s} - r_{n,s}}{\sqrt{2 r_{n,s}}} = O_p(1)
\]

The rest of the proof is similar to the proof of Theorem~\ref{lm_bic_consistency}.

\end{proof}

\begin{proof}[Proof of Theorem~\ref{upward_consistency}.]

Fix $a_n$. First, suppose that $s \notin \mathcal{S}_u^*$. Then
\[
\frac{\sqrt{r_{n,s}}}{n} t_{s, m_{n,s}} = \frac{\sqrt{r_{n,s}}}{n} \frac{\xi - r_{n,s}}{\sqrt{2 r_{n,s}}} \overset{p}{\to} \Delta/\sqrt{2},
\]
and
\[
t_{s, m_{n,s}}/\gamma_n \overset{p}{\to} \infty
\]

Thus, if for $m^*$ there is no series form $s$ with $m_{n,s} = m^*$ such that $s \in \mathcal{S}_u^*$, then all models with $m_{n,s} = m^*$ will be rejected w.p.a. 1. Thus, $\hat{s}^{UT} \in \mathcal{S}_u^*$ w.p.a. 1.

Now suppose that $s \in \mathcal{S}_u^*$. Then
\[
t_{s, m_{n,s}} = \frac{\xi_s - r_{n,s}}{\sqrt{2 r_{n,s}}} = O_p(1),
\]
and
\[
t_{s, m_{n,s}} < \gamma_n \text{ w.p.a. 1}
\]

Hence, if for $m^*$ there is a series form $s$ with $m_{n,s} = m^*$ such that $s \in \mathcal{S}_u^*$, then $\hat{m}_{n}^{UT} \leq m^*$. Therefore, $\hat{m}_{n}^{UT}(a_n) \leq m_{n,s'}(a_n)$ for all $s' \in \mathcal{S}_u^*$ w.p.a. 1.

\end{proof}

\begin{proof}[Proof of Theorem~\ref{downward_consistency}.]

First, note that by the construction of $\hat{s}^{DT}$, $t_{\hat{s}^{DT}} \leq \gamma_n$. Thus, by the same logic as in the proof of Theorem~\ref{upward_consistency}, $\hat{s}^{DT} \in \mathcal{S}_u^*$ w.p.a. 1. Moreover, under Assumption~\ref{no_gaps}, $\mathcal{M}_0^{**}$ w.p.a. 1 includes all $m^{**}$ for which there is a series form $s$ with $m_{n,s} = m^{**}$ such that $s \in \mathcal{S}_u^*$. Because $m^{DT} = \min_{m \in \mathcal{M}_0^{**}}{m}$, $\hat{m}_{n}^{DT}(a_n) \leq m_{n,s'}(a_n)$ for all $s' \in \mathcal{S}_u^*$ w.p.a. 1.

\end{proof}

\end{appendices}

\clearpage

\bibliography{literature_model_selection}

\end{document}